\numberwithin{equation}{section}
\newcommand{\om}{\omega}
\newcommand{\z}{\emptyset}
\renewcommand{\phi}{\varphi}
\newcommand{\Po}{\mathbf{Po}}			
\newcommand{\ar}{``$\Implies$'': \ }
\newcommand{\al}{``$\Leftarrow$'': \ }
\newcommand{\Implies}{\Rightarrow}
\newcommand{\Iff}{\Longleftrightarrow}
\newcommand{\da}{\downarrow}
\newcommand{\R}{\ensuremath{\mathcal{R}}}                       
\newcommand{\U}{\ensuremath{\mathcal{U}}}
\newcommand{\cO}{\ensuremath{\mathcal{O}}}
\newcommand{\Alp}{\ensuremath{\mathcal{A}}}
\newcommand{\A}{\ensuremath{\mathfrak{A}}}
\newcommand{\fA}{\ensuremath{\mathfrak{A}}}
\newcommand{\B}{\ensuremath{\mathfrak{B}}}
\newcommand{\C}{\ensuremath{\mathfrak{C}}}
\newcommand{\D}{\ensuremath{\mathfrak{D}}}
\newcommand{\N}{\ensuremath{\mathbb{N}}}
\newcommand{\V}{\ensuremath{\mathbf{V}}}
\newcommand{\cL}{\ensuremath{\mathcal{L}}}
\newcommand{\K}{\operatorname{{K}}}
\newcommand{\fin}{\mathsf{fin}}
\newcommand{\inv}[1]{#1^{-1}}
\newcommand{\Cm}{\ensuremath{\operatorname{{\mathfrak{Cm}}}}}   
\newcommand{\CmN}{\Cm~\N}
\newcommand{\CmNN}{\Cm_0\N}
\newcommand{\CmEN}{\Cm_1\N}
\newcommand{\At}{\operatorname{\mathfrak{At}}}
\newcommand{\BH}{\mbox{\sc BH}}
\newcommand{\cE}{\ensuremath{\mathcal{E}}}       
\newcommand{\df}{\ensuremath{:=}}
\newcommand{\klam}[1]{\ensuremath{\langle #1 \rangle}}
\newcommand{\set}[1]{\ensuremath{\{#1\}}}
\newcommand{\card}[1]{\ensuremath{\lvert #1 \rvert}}
\newcommand{\cmp}[1]{\ensuremath{\overline{#1}}}
\newcommand{\symdiff}{\vartriangle}
\newcommand{\onto}{\twoheadrightarrow}
\newcommand{\po}{\ensuremath{2^\omega}}
\newcommand{\In}{\operatorname{In}}
\newcommand{\tand}{\text{ and }}
\newcommand{\tiff}{if and only if \ }
\newcommand{\wlg}{w.l.o.g.~}
\renewcommand{\hom}{\mathbf{H}}
\newcommand{\sub}{\mathbf{S}}
\newcommand{\iso}{\mathbf{I}}
\newcommand{\prods}{\mathbf{P}}
\newcommand{\Var}{\ensuremath{\mathbf{Var}}}   
\newcommand{\Eq}{\mathbf{Eq}~}                  
\newcommand{\Eqs}{\ensuremath{\mathcal{E}}}
\newcommand{\EqSat}{\mathbf{EqSat}~}                  
\newcommand{\FO}{\mathbf{FO}~}
\newcommand{\cPlus}{\ensuremath{\bm{+}}}         
\newcommand{\cTimes}{\ensuremath{\bullet}}       
\newcommand{\plus}{\ensuremath{\bm{+}}}         
\newcommand{\mal}{\ensuremath{\bullet}}         
\theoremstyle{plain}
\newtheorem{thm}{Theorem}[section]
\newtheorem{lemma}[thm]{Lemma}
\newtheorem{cor}[thm]{Corollary}
\theoremstyle{definition}
\newtheorem*{claim*}{Claim}
\newtheorem{question}{Question}
\date{}
\title{Complex algebras of arithmetic%
\thanks{The authors gratefully acknowledge the support of the EPSRC, grant
number EP/F069154/1. Ivo D{\"u}ntsch also acknowledges support from the
Natural Sciences and Engineering Research Council of Canada.}}
\author{Ivo D\"untsch \\
Department of Computer Science,\\
Brock University,\\
St.\ Catharines, ON, L2S 3A1\\
Canada \\
\url{duentsch@brocku.ca} \and
Ian Pratt--Hartmann \\
School of Computer Science, \\
University of Manchester,\\
Oxford Road, Manchester, M13 9PL \\
United Kingdom, \\
\url{ipratt@cs.man.ac.uk}
}
\begin{document}
\maketitle
\thispagestyle{empty}

\begin{abstract}
\noindent
An {\em arithmetic circuit} is a labeled, acyclic directed graph specifying a sequence of arithmetic and logical operations to be performed on sets of natural numbers. Arithmetic circuits can also be viewed as the elements of the smallest subalgebra of the complex algebra of the semiring of natural numbers. In the present paper we investigate the algebraic structure of complex algebras of natural numbers and make some observations regarding the complexity of various theories of such algebras.
\end{abstract}

\section{Introduction}

Let $\omega$ be the set of natural numbers $\set{0,1,2, \ldots}$. An {\em arithmetic circuit} (AC) \cite{McKenzieW03,McKenzieW07} is a labeled, acyclic directed graph specifying a sequence of arithmetic and logical operations to be performed on sets of natural numbers. Each node in this graph evaluates to a set of natural numbers, representing a stage of the computation performed by the circuit. Nodes without predecessors in the graph are called {\em input nodes}, and their labels are singleton sets of natural numbers. Nodes with predecessors in the graph are called {\em arithmetic gates}, and their labels indicate operations to be performed on the values of their immediate predecessors; the results of these operations are then taken to be the values of the arithmetic gates in question. One of the nodes in the graph (usually, a node with no successors) is designated as the {\em circuit output}; the set of natural numbers to which it evaluates is taken to be the value of the circuit as a whole.

More formally, an arithmetic circuit is a structure $C = \klam{G,E,g_C, \alpha}$,
where $\klam{G,E}$ is a finite acyclic and asymmetric graph over $2^\omega$, $\In(g) \leq 2$ for all $g \in G$,
and $\alpha: G \to \set{\cup, \cap, {}^-, \plus, \mal} \cup \set{\set{n}: n \in \omega} \cup \set{\z, \omega}$ is a labeling function for which
\begin{gather}
\alpha(g) \in
\begin{cases}
\set{\set{n}: n \in \omega} \cup \set{\z, \N}, &\text{if } \In(g) = 0, \\
\set{{}^-}, &\text{if } \In(g) = 1, \\
\set{\cup, \cap, \plus, \mal}, &\text{if } \In(g) = 2.
\end{cases}
\end{gather}
Here, $\In(g)$ is the in--degree of $g$ and $\plus$ and $\mal$ are the complex extensions of $+$ and $\cdot$, i.e.
\begin{gather}
a \plus b \df \set{k +n: k \in a, \ n \in b}, \ a \mal b \df \set{k \cdot n: k \in a, \ n \in b}.
\end{gather}
$g_C$  is called the \emph{output gate}; if $\In(g) = 0$, we call $g$ an \emph{input gate} or a \emph{source}. 

The arithmetical interpretation of $C$ is as follows:
\begin{enumerate}[label=\textup{(}\emph{\roman*}\textup{)}]
\item If $\In(g) = 0$, then $I(g) = \alpha(g)$.
\item If $\In(g) = 1$, and $g'$ is the unique predecessor of $g$, then $I(g) =
\N \setminus {I(g')}$.
\item If $\In(g) = 2$, and $g_0, g_1$ are the two predecessors of $g$, then $I(g) \df I(g_0)~\alpha(g)~I(g_1)$.
\end{enumerate}
 $I(C)$ is defined as $I(g_C)$.

Fig.~\ref{fig:evensPrimes} shows two examples of arithmetic circuits, where the output gate is indicated by the double circle. In Fig.~\ref{fig:evensPrimes}a, Node~1 evaluates to $\set{1}$, and Node~2 to $\omega$; hence, Node~3 evaluates to $\set{1} \cPlus \set{1} = \set{2}$, and Node~4, the output of the circuit, to $\set{2} \cTimes \omega$, i.e.~the set of even numbers. The circuit of Fig.~\ref{fig:evensPrimes}b functions similarly: Node~2 evaluates to $\set{0} \cup \set{n \in \omega: n \geq 2}$, and Node~3 to $\set{0} \cup \set{n \in \omega: \text{$n$ is composite}}$; hence, Node~4 evaluates to the set numbers which are either prime or equal to 1, and Node~5, the output of the circuit, to the set of primes. We say that the circuits of Fig.~\ref{fig:evensPrimes}a and Fig.~\ref{fig:evensPrimes}b {\em define}, respectively, the set of even numbers and the set of primes. Any arithmetic circuit defines a set of numbers in this way.

\begin{figure}[htb]
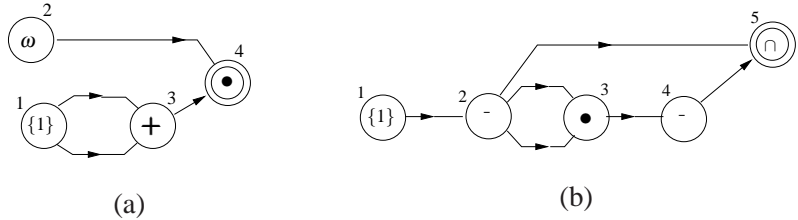

\label{fig:evensPrimes}
\caption{Arithmetic circuits defining: (a) the set of even numbers; (b)
  the set of primes. The integers next to the nodes are for reference only.}
\begin{center}

\vspace{0.15cm}

\begin{minipage}{3.5cm}
\begin{center}
\input{evens.pstex_t}

\vspace{0.25cm}

(a)
\end{center}
\end{minipage}
\hspace{1cm}
\begin{minipage}{6cm}
\begin{center}
\input{primes.pstex_t}

\vspace{0.25cm}

(b)
\end{center}
\end{minipage}

\vspace{-0.5cm}

\end{center}
\end{figure}

If $\cO \subseteq \set{\cap,\cup,{}^-,\plus,\mal}$, an \emph{$\cO$ -- circuit} is an arithmetic circuit whose non--input labels are among those contained in $\cO$. Let
\begin{gather}\label{memprob}
MC(\cO) = \set{\klam{C,n}: C \text{ is an $\cO$ -- circuit}, n \in I(C)}.
\end{gather}
The \emph{membership problem for $\cO$} is the question whether $MC(\cO)$ is decidable \cite{McKenzieW03}. In other words, is there an algorithm which decides membership of an arbitrary $n \in \omega$ in an arbitrary output $C$ of an $\cO$ -- circuit? If the problem is decidable, then its complexity is of interest. For almost all cases of $\cO$, the complexities have been determined by \citet{McKenzieW03}. The question whether $MC(\cO)$ is decidable where $\cO = \set{\cap,\cup,{}^-,\plus,\mal}$ is still open. The table of complexities for the membership problem where all Boolean operators are present is given in Table \ref{fig:comp1}.

\begin{table}[htb]
\caption{Complexity results for MC \cite{McKenzieW03}}\label{fig:comp1}
$$
\begin{array}{|l|c|c|} \hline
\cO & \text{Lower bound} & \text{Upper bound} \\ \hline
\cap,\cup,{}^-,\plus & \text{PSPACE} & \text{PSPACE} \\
\cap,\cup,{}^-,\mal & \text{PSPACE} & \text{PSPACE} \\
\cap,\cup,{}^-,\plus,\mal & \text{NEXPTIME} & ? \\ \hline
\end{array}
$$
\end{table}

Algebraically speaking, an arithmetic circuit can be regarded as a well -- formed term over an alphabet $\Alp$ containing operations from $\set{\cap,\cup,{}^-,\z, \omega,\plus,\mal}$ and constants from $\set{\set{n}: n \in \omega}$ as input gates. If $\plus$ is present, then $\set{0}$ will suffice since
\begin{gather}
\set{1} = \cmp{\cmp{\set{0}} \plus \cmp{\set{0}}} \cap \cmp{\set{0}}.
\end{gather}

The membership problem now can be seen as a word problem over $\Alp$:
\begin{align}
& \text{Given $n \in \omega$ and a well formed term $\tau$ over $\Alp$, is $\set{n} \cap \tau = \set{n}$}? \label{mp1}
\end{align}
It is natural to generalize the notion of arithmetic circuits by allowing input nodes to represent {\em variable}
sets of numbers \cite{GRTW07}. Logically speaking, we enhance our language by a set $V$ of variables which are interpreted as sets of natural numbers; arithmetic circuits correspond to the variable free terms of this language.
It now makes sense to consider satisfiability and validity of (in--) equations of terms of this language under this interpretation. Furthermore, the operations $f: (\po)^k \rightarrow \po$ definable from the given operators $\cO$ can be studied \cite{pd_acfunc}.

In analogy to the membership problem, \citet{GRTW07} consider the complexity of
\begin{xalignat*}{2}
SC(\cO) &= \set{\klam{C(x_0, \ldots,x_n),k}: C \text{ is an $\cO$ circuit and }
(\exists k_0, \ldots,k_n)[k \in I(C({k_0}, \ldots,{k_n})]}&& 
\end{xalignat*}
for various sets $\cO$ and determine many of these complexities. The main open problem is the question whether $SC(\cap,\cup,{}^-, \mal)$ is decidable. In other words, is it decidable whether the equation

\begin{gather}\label{sc}
\set{k} \cap \tau(x_0,\ldots,x_{n-1}) = \set{k}
\end{gather}
has a solution over the subsets of $\om^n$?

In this paper we shall shed some light on these question and the structure of arithmetic circuits from an algebraic viewpoint. Our main tool will be the apparatus of Boolean algebras with operators,
in particular, complex algebras of first order structures, which were introduced by \citet{jt51}.

\section{Notation and definitions}
\subsection{Algebras}

An algebra $\A$ is a pair $\A = \klam{A,\cO}$, where $A$ is a set and $\cO = \set{f_i: i \in I}$ a set of operation symbols
$f$ each having a finite arity $\alpha(f)$; if we write $f(x_0, \ldots, x_{n-1})$ we implicitly assume that $\alpha(f) = n$.
Operations of arity $0$ are called \textit{(individual) constants}. We will usually denote
algebras by gothic letters $\A, \B, \ldots$, and their universes by the corresponding roman
letter $A,B, \ldots$. $\A$ is called \emph{subdirectly irreducible} if it has a smallest nontrivial congruence, and \emph{congruence--distributive} if its congruence lattice is distributive.

Suppose that $\K$ is a class of algebras (of the same type $\cO$). For $\A,\B \in K$, $\A \leq \B$ means that $\A$ is a subalgebra of $\B$. The operators $\iso, \sub, \hom$ and $\prods$ have their usual meaning. $\Var(\K)$ is the variety generated by $\K$, i.e. $\Var(\K) = \hom\sub\prods(\K)$. A variety $\V$ is called \emph{finitely based} if there is a finite set $\Sigma$ of equations in the language of $\V$ such that $\A \in \V$ \tiff $\A \models \Sigma$, and $\V$ is called \emph{finitely generated} if there is a finite set $\K$ of finite algebras such that $\V = \Var(\K)$.

Suppose that $\K$ is a class of algebras of the same type $\cO$. We consider the following sets of formulas in the language of ${\cO}$ (plus equality).
\begin{enumerate}[label=\textup{(}\emph{\roman*}\textup{)}]
\item The {\em first-order theory $\FO\K$ of $\K$}: The set of first-order formulas true in each member of $\K$.
\item The {\em equational theory $\Eq\K$ of} $\K$: The set of formulas of the forms $\tau(x_0, \ldots,x_n) = \sigma(x_0, \ldots,x_n)$ whose universal closures are true in each member of $\K$.
\item The {\em satisfiable equations $\EqSat\K$ of} $\K$: The set of formulas of the forms $\tau(x_0, \ldots,x_n) = \sigma(x_0, \ldots,x_n)$ whose existential closures are true in each member of $\K$.
\end{enumerate}
If $\K = \set{\A}$, we usually write $\FO\A$, $\Eq\A$, etc.

\subsection{Boolean algebras with operators}

In the following, let $\B = \klam{B,{\lor, \land, {}^-, \bot, \top}}$ be a Boolean algebra (BA); here, $\bot$ is the smallest and $\top$ is the largest element of $B$. If $a,b \in B$, then $a \symdiff b$ denotes the symmetric difference $(a \land \cmp{b}) \lor (b \land \cmp{a})$; note that $a = b$ \tiff $a \symdiff b = \bot$. If $\B$ is atomic, $FC(\B)$ is the finite--cofinite Boolean subalgebra of $\B$, i.e. every $b \in FC(\B) \setminus \set{\bot,\top}$ is a finite sum of atoms or the complement of such an element.

Suppose that $f$ is an n--ary operator on $B$.

\begin{enumerate}[label=\textup{(}\emph{\roman*}\textup{)}]
\item $f$ is called \emph{additive in its i--th argument}, if
\begin{gather*}
f(a_0, \ldots, a_{i-1},x, a_{i+1}, \ldots,a_{n-1}) \lor f(a_0, \ldots, a_{i-1},y, a_{i+1}, \ldots,a_{n-1}) = f(a_0, \ldots, a_{i-1},x \lor y, a_{i+1}, \ldots,a_{n-1}).
\end{gather*}
\item $f$ is called \emph{normal in its i--th argument} if $f(a_0, \ldots, a_{i-1},\bot, a_{i+1}, \ldots,a_{n-1}) = \bot$.
\end{enumerate}

Note that an additive operator is \emph{isotone}, i.e. it preserves the Boolean order in each of its arguments.

A \emph{Boolean algebra with operators} (BAO) is a Boolean algebra with additional mappings of finitary rank that are additive and normal in each argument \cite{jt51}.

A \emph{(unary) discriminator function} on $\B$ is an operation $d$ on $\B$ such that for all $a \in B$,
\begin{gather}\label{discr2}
d(a) =
\begin{cases}
\bot, &\text{if } a = \bot, \\
\top, &\text{otherwise}.
\end{cases}
\end{gather}
If $\B$ has a discriminator function, we call $\B$ a \emph{discriminator algebra}.

For a class $\K$ of BAOs, a unary term $t$ is a \emph{discriminator term} if it represents the discriminator function on each
subdirectly irreducible member of $\K$. A variety of BAOs is called a \emph{discriminator variety} if it is generated by a class of algebras
with a common discriminator term. 

Having a discriminator function $d$ allows us to convert satisfiability (validity) of inequations into satisfiability (validity) of equations:
Suppose that $\tau(\vec{x})$ and $\sigma(\vec{x})$ are terms with variables $\vec{x}$. Then
\begin{align}
(\exists \vec{x})[\tau(\vec{x}) \neq \sigma(\vec{x})] &\Iff (\exists \vec{x})[\tau(\vec{x}) \symdiff \sigma(\vec{x}) \neq \bot]
\Iff (\exists \vec{x})[d(\tau(\vec{x}) \symdiff \sigma(\vec{x})) = \top], \label{disceq0} \\
(\forall \vec{x})[\tau(\vec{x}) \neq \sigma(\vec{x})] &\Iff (\forall \vec{x})[\tau(\vec{x}) \symdiff \sigma(\vec{x}) \neq \bot] \Iff (\forall \vec{x})[d(\tau(\vec{x}) \symdiff \sigma(\vec{x})) = \top]. \label{disceq1}
\end{align}

If $\K$ is a class of algebras of the same type, we denote by $\K^d$ the class
obtained from adding a unary operation symbol which represents the discriminator function on the
members of $\K$.

\subsection{Complex algebras}
Traditionally, a subset of a group $G$ is called a \emph{complex of $G$}; the \emph{power algebra} of $G$ has $2^G$ as its universe, and the group operations lifted to $2^G$. Complex algebras are a generalization of this situation and special instances of BAOs. Suppose that $\klam{\A,\cO}$ is an algebra, and $f \in \cO$ is $n$--ary.  The \emph{complex operation} $\mathbf{f}: (2^A)^n \to 2^A$ corresponding to $f$ is defined by
\begin{gather}\label{cmf}
\mathbf{f}(a_0, \ldots, a_{n-1}) = \set{f(x_0, \ldots, x_{n-1}): x_0 \in a_0, \ldots, x_{n-1} \in a_{n-1}}.
\end{gather}
The \emph{full complex algebra of $\A$}, denoted by $\Cm\A$, has as its universe the powerset of $A$ and, besides the Boolean set operations, for each $f \in \cO$ its complex operator $\mathbf{f}$ defined by \eqref{cmf}.

More generally, the \emph{full complex algebra $\Cm\U$} of a relational structure  $\klam{U,\R}$ is the algebra $\klam{2^U, \cup,\cap, {}^-, \z,U}$, which has for every $R \in \R$ of, say, arity $n+1$, an $n$ -- ary operator
$f_R: (2^U)^n \to 2^U$ defined by
\begin{gather}\label{cm}
f_R(X_0, \ldots, X_{n-1}) = \set{y \in U: (\exists x_0,\ldots,x_{n-1})[x_0 \in X_0, \ldots, x_{n-1} \in X_{n-1} \tand R(y, x_0, \ldots, x_{n-1})]},
\end{gather}
see e.g. \cite{gol89}.

Each subalgebra of $\Cm \A$ is called a \emph{complex algebra of $\A$}. Of particular interest for us are the subalgebra of $\Cm \A$ generated by the constants, which we denote by $\Cm_0 \A$, and the subalgebra of $\Cm \A$ generated by the singletons $\set{a}$, where $a \in A$; we denote this algebra by $\Cm_1 \A$. Then, $\Cm_0 \A$ is the smallest subalgebra of $\A$ and $\Cm_1 \A$ is the subalgebra of $\Cm \A$ generated by the atoms. Clearly, $\Cm_0 \A \leq \Cm_1 \A$, but the converse need not be true; an example will be given below.

\subsection{Boolean monoids}

The complex algebras of the various structures which we will consider have one or more commutative Boolean monoids as a reduct:
A \emph{commutative Boolean monoid} (CBM) is an algebra $\A = \klam{A,\lor,\land,{}^-,\bot,\top, \circ, e}$ such that
\begin{align}
& \klam{A,\lor,\land,{}^-,\bot,\top} \text{ is a Boolean algebra}.\\
& \klam{A,\circ,e} \text{ is a commutative monoid.} \\
& \label{normal} x \circ \bot= \bot. \\
& \label{additive} x \circ (y \lor z) = (x\circ y) \lor (x \circ z).
\end{align}
In the sequel, we let $c(x) = x \circ \top$; it is well known that $c$ is an additive closure operator on CBMs \cite{jip92}. Furthermore \cite[see e.g.][]{reich_diss},
\begin{lemma}\label{lem:cong}
\begin{enumerate}[label=\textup{(}\emph{\roman*}\textup{)}]
\item The class CBM is congruence distributive.
\item $I$ is a congruence ideal -- i.e. the kernel of a congruence -- on a CBM $\A$ \tiff $I$ is a Boolean ideal and $x \in I$ implies $c(x) \in I$ for all $x \in A$.
\item The principal (Boolean) ideal generated by $c(x)$ is the smallest congruence ideal containing $x$.
\end{enumerate}
\end{lemma}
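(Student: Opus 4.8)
The plan is to verify the three parts in order, building each on the characterisation of congruence ideals that part (\emph{ii}) provides. Throughout I work in a fixed CBM $\A = \klam{A,\lor,\land,{}^-,\bot,\top,\circ,e}$ and use the fact, noted just before the statement, that $c(x) = x \circ \top$ is an additive closure operator: that is, $x \leq c(x)$, $c(c(x)) = c(x)$, $c$ is isotone, and $c(x \lor y) = c(x) \lor c(y)$. These properties follow routinely from the monoid axioms together with normality \eqref{normal} and additivity \eqref{additive}, so I would recall them briefly rather than rederive them.

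First I would prove part (\emph{ii}), since the other two parts depend on it. The congruences of $\A$ restrict to congruences of its Boolean reduct, and on a Boolean algebra every congruence is determined by its kernel ideal $I = \set{x : x \mathrel{\theta} \bot}$; so the task is to pin down exactly which Boolean ideals arise as kernels of \emph{CBM} congruences. For the forward direction, if $I$ is a congruence kernel and $x \in I$, then $x \mathrel{\theta} \bot$, and applying the compatible operation $\circ$ with $\top$ on the right gives $c(x) = x \circ \top \mathrel{\theta} \bot \circ \top = \bot$ by normality, so $c(x) \in I$; thus $I$ is a Boolean ideal closed under $c$. For the converse, given a Boolean ideal $I$ with $x \in I \Rightarrow c(x) \in I$, I would define $\theta$ by $a \mathrel{\theta} b$ iff $a \symdiff b \in I$ and check that $\theta$ respects $\circ$. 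The Boolean operations are automatically compatible because $I$ is a Boolean ideal; the only real point is $\circ$. Here I would estimate $(a \circ b) \symdiff (a' \circ b')$ from above using additivity of $\circ$ (hence its isotonicity) and show it lies in $I$ whenever $a \symdiff a', b \symdiff b' \in I$. The key inequality is that $(a\circ b)\symdiff(a'\circ b') \leq c(a\symdiff a') \lor c(b \symdiff b')$, after which closure of $I$ under $c$ and under $\lor$ finishes the argument. Establishing this inequality cleanly is the main obstacle, and I expect to obtain it by the standard Boolean-operator estimate: $a \circ b \leq (a' \circ b') \lor (a \circ \top)\big|_{\text{difference part}}$, bounding each ``error'' term by a $c$ of a symmetric difference.

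Part (\emph{iii}) is then almost immediate from (\emph{ii}). The principal Boolean ideal $(c(x)]$ is a Boolean ideal, and it is closed under $c$ because any $y \leq c(x)$ satisfies $c(y) \leq c(c(x)) = c(x)$ by isotonicity and idempotence of the closure operator, so $c(y) \in (c(x)]$; hence $(c(x)]$ is a congruence ideal by (\emph{ii}). It contains $x$ since $x \leq c(x)$. Conversely, any congruence ideal $J$ with $x \in J$ must contain $c(x)$ by the closure condition, and therefore contains the whole principal ideal $(c(x)]$; this shows $(c(x)]$ is the smallest congruence ideal containing $x$.

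Finally, for part (\emph{i}), congruence distributivity, I would appeal to the fact that the congruence lattice of a CBM is isomorphic to the lattice of its congruence ideals, ordered by inclusion, via the kernel correspondence of part (\emph{ii}). It then suffices to show this lattice of congruence ideals is distributive. Since these ideals are in particular Boolean ideals and the congruence ideals are closed under the relevant lattice operations (intersection is clearly a congruence ideal, and the join can be described using $c$), distributivity is inherited from the distributivity of the lattice of Boolean ideals of the underlying Boolean algebra. Concretely, I would verify that $I \cap (J \lor K) \subseteq (I \cap J) \lor (I \cap K)$ by taking an element of the left side, writing it below a join and using that the $c$-closed structure respects the Boolean distributive law. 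The cited reference \cite{reich_diss} presumably carries out these verifications in full, so in the paper I would state the lemma as known and give only the short derivation of (\emph{ii}) and (\emph{iii}) from the closure-operator properties of $c$, which is where the CBM-specific content lies.
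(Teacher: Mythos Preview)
Your argument is correct in all three parts; the inequality $(a\circ b)\symdiff(a'\circ b')\leq c(a\symdiff a')\lor c(b\symdiff b')$ follows exactly as you indicate, from $a\leq a'\lor(a\symdiff a')$ together with additivity and isotonicity of~$\circ$, and parts~(\emph{iii}) and~(\emph{i}) then follow from~(\emph{ii}) and the closure-operator properties of~$c$ as you describe. Note, however, that the paper does not supply a proof of this lemma at all: it is stated with a bare citation to \cite{reich_diss}, so there is no ``paper's own proof'' to compare against. Your final sentence anticipates this correctly.

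One remark on part~(\emph{i}): your route via the sublattice of Boolean ideals is fine (and your implicit claim that the join of two $c$-closed Boolean ideals is again $c$-closed is easily checked, since $x\leq i\lor j$ gives $c(x)\leq c(i)\lor c(j)$ by additivity of~$c$), but there is a shorter standard argument that avoids~(\emph{ii}) entirely. Any algebra with a lattice reduct has the majority term $m(x,y,z)=(x\land y)\lor(y\land z)\lor(z\land x)$, and the existence of a majority term forces congruence distributivity. Since every CBM has a Boolean (hence lattice) reduct, part~(\emph{i}) is immediate from this general fact, independently of the description of congruence ideals.
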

An element $x \in A$ is called a \emph{congruence element} if $c(x) = x$. By Lemma \ref{lem:cong}(3), each principal congruence ideal $I$ of $\A$ is of the form $I = \set{y: y \leq x}$ for some congruence element $x$. Note that a CBM is simple -- i.e. has only two congruences -- \tiff it satisfies
\begin{gather}\label{simple}
(\forall x)[x = \bot \lor c(x) = \top].
\end{gather}

\section{Complex algebras of $\N$}

Let $\N = \klam{\omega, 0, +, \cdot, 1}$ be the semiring of natural numbers, and $\CmN = \klam{2^\omega, \cap, \cup, {}^-, \z, \omega, \set{0}, \plus, \set{1}, \mal}$ be its full complex algebra, i.e.
\begin{align*}
a \plus b &= \set{n+m: n \in a, \ m \in b}, \\
a \mal b &= \set{n \cdot m: n \in a, \ m \in b}.
\end{align*}

A function $F: (2^\omega)^n \to 2^\omega$ is called \emph{circuit definable} if there is a term $\tau(v_0, \ldots, v_{n-1})$ in the language of $\CmN$ such that $F(s_0, \ldots, s_{n-1}) = \tau(s_0/v_0, \ldots, s_{n-1}/ v_{n-1})$ for all $s_0, \ldots, s_{n-1} \subseteq \omega$.
A subset $a$ of $\om$ is called \emph{circuit definable}, if there is a closed (i.e. variable free) term $\tau$ that evaluates to $a$. Each element of the smallest subalgebra $\CmNN$ of $\CmN$ corresponds to an arithmetic circuit with finite input nodes and vice versa via the interpretation $I$.

Both $\klam{2^\omega, \plus, \set{0}}$ and $\klam{2^\omega, \mal, \set{1}}$ are commutative monoids. Furthermore, $\plus$ and $\mal$ are normal and (completely) additive operators with respect to $\cup$, so that $\CmN$ is a Boolean algebra with operators, and
$$\klam{2^\omega, \cup, \cap, {}^-, \z, \omega, \plus, \set{0}}, \quad \klam{2^\omega, \cup, \cap, {}^-, \z, \omega, \mal, \set{1}}$$
are CBMs.

\begin{thm}\label{thm:disc0}
\begin{enumerate}[label=\textup{(}\emph{\roman*}\textup{)}]
\item \CmN\ is a discriminator algebra.
\item $\CmNN = \CmEN$.
\item $\CmNN$ is embeddable into any simple algebra of $\Var(\CmN)$.
\end{enumerate}
\end{thm}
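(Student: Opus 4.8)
The plan is to exhibit a single unary term that realises the discriminator function \eqref{discr2} on \CmN, namely $d(x) \df (x \mal \om) \plus \om$. I would check it on the two cases. If $x = \z$ then $x \mal \om = \z$, since any complex product with an empty factor is empty, and so $d(\z) = \z \plus \om = \z = \bot$. If $x \neq \z$, pick $k \in x$; from $k \cdot 0 = 0$ we get $0 \in x \mal \om$, whence $d(x) = (x \mal \om) \plus \om \supseteq \set{0} \plus \om = \om = \top$. The order of the two closures is what makes this work: applying $\plus\om$ first merely produces a half-line and need not capture $0$, whereas multiplying by $\om$ first always forces $0$ into the set, after which adding $\om$ sweeps out all of $\om$. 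Any algebra on which a term realises the discriminator is simple, so \CmN is simple; thus $d$ is a discriminator term for the class $\set{\CmN}$ and $\Var(\CmN)$ is a discriminator variety.

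\textbf{Part (ii).} The inclusion $\CmNN \leq \CmEN$ is already noted, so it is enough to show that every singleton lies in $\CmNN$. But $\set{0}$ and $\set{1}$ are among the generating constants, and for $n \geq 2$ the closed term $\set{1} \plus \cdots \plus \set{1}$ ($n$ summands) evaluates to $\set{n}$; hence every singleton is circuit definable, giving $\CmEN \leq \CmNN$ and so equality.

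\textbf{Part (iii).} Let $\A$ be simple in $\Var(\CmN)$. The strategy is to build the canonical homomorphism $h \colon \CmNN \to \A$ and to show it is injective. Since $\CmNN$ is generated by its constants, each of its elements is $\tau^{\CmN}$ for some closed term $\tau$, and I would put $h(\tau^{\CmN}) \df \tau^{\A}$; this is well defined and a homomorphism, because whenever $\sigma^{\CmN} = \tau^{\CmN}$ the closed equation $\sigma = \tau$ holds in \CmN, hence in every member of $\Var(\CmN) = \hom\sub\prods(\CmN)$, in particular in $\A$. For injectivity, take $a \neq b$ in $\CmNN$; then $a \symdiff b \neq \bot$ in \CmN, so by part (i) the closed identity $d(a \symdiff b) = \top$ holds in \CmN and is transported by $h$ to $d^{\A}(h(a) \symdiff h(b)) = \top^{\A}$. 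The decisive step, and the main obstacle, is to turn this into $h(a) \neq h(b)$ using the simplicity of $\A$: the point is that in the discriminator variety $\Var(\CmN)$ the term $d$ realises the discriminator not only on the generator \CmN but on every subdirectly irreducible member --- where, moreover, subdirect irreducibility and simplicity coincide --- which is the standard structure theory of discriminator varieties. Granting it, $d$ is the discriminator on $\A$, so $d^{\A}(c) = \top^{\A}$ forces $c \neq \bot^{\A}$; with $c = h(a) \symdiff h(b)$ this gives $h(a) \neq h(b)$, so $h$ embeds $\CmNN$ into $\A$. All the genuine content thus lies in (i) and in the appeal to the discriminator-variety machinery; the rest is bookkeeping with closed terms.
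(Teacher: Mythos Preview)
Your proof is correct. Parts (i) and (ii) match the paper's approach almost exactly; the paper uses the discriminator term $\om \plus (\set{0} \mal x)$ instead of your $(x \mal \om) \plus \om$, but both work for the same reason (forcing $0$ into a nonempty set and then sweeping out $\om$).

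For part (iii) the paper takes a different route. Rather than constructing the homomorphism $h\colon \CmNN \to \A$ by hand and proving injectivity via the discriminator, the paper invokes the structure theory up front: since $\Var(\CmN)$ is a discriminator variety, every simple member already sits inside an ultrapower ${}^\kappa\CmN/U$, so it suffices to check that the smallest subalgebra of such an ultrapower is the diagonal image $e[\CmNN]$. That is immediate, because $\CmNN$ is generated by the constant $\set{0}$ and $e$ is an embedding. Your argument and the paper's rest on the same discriminator-variety fact (simple $=$ subdirectly irreducible, and these lie in $\iso\sub\prodsu(\CmN)$); the paper uses it to locate $\A$ inside an ultrapower and then reads off the embedding, whereas you use it to guarantee that $d$ is the discriminator on $\A$ and then verify injectivity elementwise. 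The paper's version is a little slicker in that it avoids the case analysis on $d^{\A}$, but your version makes the role of the discriminator term in the injectivity step more explicit.
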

\begin{proof}
(i) \ Let $f(x)$ be the function $\om \plus (\set{0} \mal x)$. If $x = \z$, then $\set{0} \mal x = \z$, and thus, $f(x) = \om \plus \z = \z$.
If $x \neq \z$, then $\set{0} \mal x = \set{0}$, hence $f(x) = \om \plus \set{0} = \omega$.

(ii) \ The atoms of $\CmN$ are the singletons $\set{n}$, and $\set{n} = \underbrace{\set{1} \plus \ldots \set{1}}_{n \text{ times}}$ if $n > 0$.

(iii) Since $\CmN$ is a discriminator algebra, it suffices to show that the smallest subalgebra $\A$ of an ultrapower of copies of $\CmN$ is isomorphic to $\CmNN$. Thus, let $\B \df {}^\kappa \CmN/U$ be an ultrapower of $\CmN$. Suppose that $e: \CmN \to \B$ is the canonical embedding, i.e. $e(a) = f_a/U$, where $f_a(i) = a$ for all $i < \kappa$. Since $\CmNN$ is generated by $\set{0}$, $e[\CmNN]$ is generated by $e(\set{0})$, and thus, since $e$ is an embedding, $e[\CmNN]$ is the smallest subalgebra of $\B$.
\end{proof}

\begin{thm}
The Boolean reduct of $\CmNN$ has $2^\omega$ ultrafilters.
\end{thm}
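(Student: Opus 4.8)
The plan is to produce $2^\omega$ distinct ultrafilters directly, by constructing an infinite binary splitting tree of elements of $\CmNN$ and attaching an ultrafilter to each of its branches. Note first that $\CmNN$, being the smallest subalgebra (generated by finitely many constants under finitely many finitary operations), has a countable universe; hence there are at most $2^\omega$ subsets of the universe and a fortiori at most $2^\omega$ ultrafilters. It therefore suffices to exhibit at least that many.

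The first step is to check that every arithmetic progression lies in $\CmNN$. For $0 \leq r < k$, the residue class $\set{m \in \om : m \equiv r \pmod k}$ is exactly $(\set{k} \mal \om) \plus \set{r}$: indeed $\set{k} \mal \om = \set{0, k, 2k, \ldots}$, and adding $\set{r}$ yields $\set{r, k+r, 2k+r, \ldots}$, which (since $r < k$) is the whole class. By Theorem~\ref{thm:disc0}(ii) each singleton $\set{n}$ belongs to $\CmNN$, and $\om$ is a constant, so all these progressions are in the algebra.

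Next I would index a tree by $2^{<\omega}$. For $s = (s_0, \ldots, s_{n-1}) \in 2^n$ set $A_s = \set{m \in \om : m \equiv \sum_{i<n} s_i 2^i \pmod{2^n}}$, a residue class modulo $2^n$, with $A_\z = \om$. By the previous paragraph every $A_s \in \CmNN$, and the defining congruences show that the level $n{+}1$ classes refine the level $n$ classes exactly: $A_s = A_{s0} \cup A_{s1}$ with $A_{s0} \cap A_{s1} = \z$ and both children infinite. Consequently, for each branch $x \in \po$ the chain $\set{A_{x\restriction n} : n \in \om}$ is a decreasing family of nonempty sets, every finite subfamily of which has nonempty intersection; it thus generates a proper filter $F_x$, which I would extend (by the Boolean prime ideal theorem) to an ultrafilter $U_x$.

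Finally, distinct branches give distinct ultrafilters. If $x \neq y$, they first differ at some coordinate $n$, where $A_{x\restriction(n+1)}$ and $A_{y\restriction(n+1)}$ are the two disjoint children of their common predecessor; since an ultrafilter contains at most one of two disjoint elements, $U_x \neq U_y$. The map $x \mapsto U_x$ is therefore injective, so $\CmNN$ has at least, hence exactly, $2^\omega$ ultrafilters. I expect no serious obstacle beyond the bookkeeping of the first two steps: the only points that genuinely require care are that the splitting sets $A_s$ really are members of $\CmNN$ (handled by writing them as arithmetic progressions) and that successive tree levels refine one another precisely; the remainder is the standard Cantor-tree argument for bounding the number of ultrafilters from below.
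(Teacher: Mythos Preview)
Your argument is correct. The splitting tree of dyadic residue classes is a perfectly good way to manufacture $2^\omega$ distinct ultrafilters, and you have verified the two points that matter: each $A_s$ is a genuine element of $\CmNN$ (via $(\set{2^n}\mal\omega)\plus\set{r}$), and consecutive levels refine exactly. Your explicit upper bound via countability of $\CmNN$ is a welcome addition that the paper leaves implicit.

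The paper takes a different route: it observes that the family $\set{\omega\mal\set{p}: p \text{ prime}}$ is \emph{independent} in the Boolean sense --- for distinct primes $p_0,\ldots,p_k,q_0,\ldots,q_k$, the product $p_0\cdots p_k$ witnesses that $(\omega\mal\set{p_0})\cap\cdots\cap(\omega\mal\set{p_k})\cap\cmp{\omega\mal\set{q_0}}\cap\cdots\cap\cmp{\omega\mal\set{q_k}}\neq\z$ --- and then appeals to the standard fact that an infinite independent family generates an atomless subalgebra, which already has $2^\omega$ ultrafilters. The two arguments are close cousins (your tree is what one obtains by intersecting along an independent family, here the family of ``even modulo $2^{n}$'' sets), but the specific witnesses differ: the paper uses only $\mal$ and primes, whereas you use both $\plus$ and $\mal$ through arithmetic progressions. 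The paper's version is a line shorter; yours is more self-contained and avoids invoking the atomless $\Rightarrow 2^\omega$-ultrafilters fact as a black box.
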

\begin{proof}
Let $p_0, \ldots, p_k, q_0, \ldots, q_k$ be different primes; then
\begin{gather*}
p_0 \cdot \ldots \cdot p_k \in (\omega \mal {p_0}) \land \ldots \land (\omega \mal {p_k}) \land (\cmp{\omega \mal \set{q_0}}) \land \ldots \land (\cmp{\omega \mal \set{q_k}}).
\end{gather*}
Hence, $\set{\omega \mal \set{p}: p \text{ prime}}$ is an independent set which generates an atomless Boolean subalgebra $\A$ of $\CmNN$. $\A$ has $2^\omega$ ultrafilters, and thus, so has $\CmNN$.
\end{proof}

The \emph{atom structure $\At\CmN$ of $\CmN$} has the set $\Omega = \set{\set{n}: n \in \omega}$ as its universe, and for each n -- ary operator $f$ an n+1--ary relation $R_f\df \set{\klam{p,q}: p \in \Omega^n \tand q \in \Omega, q \subseteq f(p)}$. Then,
\begin{align*}
R_{\plus}(\set{k},\set{n},\set{m}) &\Iff \set{m} \subseteq \set{k} \plus \set{n} \Iff k+n = m, \\
R_{\mal}(\set{k},\set{n},\set{m}) &\Iff \set{m} \subseteq \set{k} \mal \set{n} \Iff k \cdot n = m.
\end{align*}
It is well known that $\At\CmN \cong \N$. Let us call a relation on $\At\CmN$, i.e. on $\N$, \emph{circuit definable} if it corresponds to a circuit definable operator on $\CmN$.  A striking example of the lack of expressiveness of arithmetic circuits is the following:
\begin{thm}
\begin{enumerate}[label=\textup{(}\emph{\roman*}\textup{)}]
\item In $\N$, the converse $\geq$ of the natural ordering is circuit definable, while $\leq$ is not.
\item Relative subtraction is not circuit definable.
\end{enumerate}
\end{thm}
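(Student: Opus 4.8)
The plan is to treat the two orderings through the correspondence between completely additive normal operators on $\CmN$ and binary relations on the atom structure $\At\CmN\cong\N$, under which a relation $S$ is circuit definable exactly when the completely additive operator $f_S(X)=\set{y:(\exists x\in X)\,S(y,x)}$ is term definable. For the converse ordering I would simply exhibit the operator: $f(X)=X\plus\om$ is completely additive and normal, and on a singleton it gives $f(\set{a})=\set{y:y\geq a}$, so its atom relation is precisely $\geq$. Thus $\geq$ is defined by the term $v_0\plus\om$, settling the positive half of (i).

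Before attacking $\leq$ I would record a reduction that yields (ii) for free. The completely additive operator attached to $\leq$ is the downward closure $F(X)=\set{y:(\exists x\in X)\,y\leq x}$, i.e.\ $F(X)=\set{0,\dots,\max X}$ on finite $X$. Writing $\ominus$ for the complex relative subtraction, one computes $\set{a}\ominus\om=\set{a\ominus n:n\in\om}=\set{0,1,\dots,a}$, so $X\ominus\om=F(X)$. Hence a circuit term for $\ominus$ would give one for $F$, i.e.\ for $\leq$; so it suffices to prove that $\leq$ is not circuit definable, and (ii) follows.

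For the core claim I would argue by contradiction: suppose $F=\tau$ for a circuit term $\tau$; since $F$ is completely additive, so is $\tau$. The engine is reduction modulo $m$. For $m\geq 2$ put $\rho_m(X)=\set{x\bmod m:x\in X}\subseteq\mathbb{Z}/m\mathbb{Z}$. The map $\rho_m$ is a homomorphism for $\cup,\plus,\mal$ and the constants $\z,\om,\set{0},\set{1}$ (indeed it restricts to an isomorphism of the subalgebra of $m$-periodic sets onto the finite complex algebra $\Cm(\mathbb{Z}/m\mathbb{Z})$), so every \emph{positive} term --- one built without $\cmp{\ \,}$ and $\cap$ --- satisfies $\rho_m(\tau(X))=\tau^{\mathbb{Z}/m\mathbb{Z}}(\rho_m(X))$ for all $X$. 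Granting that $\tau$ may be taken positive, the contradiction is immediate: since $\rho_m(\set{m})=\rho_m(\set{0})=\set{0}$ we would get $\rho_m(F(\set{m}))=\rho_m(F(\set{0}))$, whereas $\rho_m(F(\set{m}))=\rho_m(\set{0,\dots,m})=\mathbb{Z}/m\mathbb{Z}$ while $\rho_m(F(\set{0}))=\set{0}$, forcing $\mathbb{Z}/m\mathbb{Z}=\set{0}$.

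The main obstacle is exactly the clause ``$\tau$ may be taken positive''. The map $\rho_m$ is not a homomorphism for $\cap$ and complement (already $\rho_3(\cmp{\set 0})\neq\cmp{\rho_3(\set 0)}$), and on the finite inputs that witness $F$'s behaviour there is no genuine quotient to exploit --- indeed $\CmN$ is simple, so every homomorphism out of it is injective and collapses nothing. I therefore expect the crux to be a normal-form step showing that the \emph{inflationary}, completely additive operator $F$ (note $X\subseteq F(X)$, unlike the additive but non-inflationary ``even part'' $X\cap(\om\mal\set 2)$, which genuinely needs $\cap$) can, if term definable at all, be defined without $\cap$ and complement. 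Conceptually this is the statement that $\plus,\mal$ together with the Boolean operations cannot reconstruct the conjugate of $\plus$ --- running addition backwards --- which is precisely relative subtraction; establishing this positive normal form is where the real work lies, after which the modular computation above closes the argument.
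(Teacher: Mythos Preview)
Your treatment of the positive half of (i) matches the paper: both observe that the term $x \plus \omega$ realises the operator attached to $\geq$. Your reduction of (ii) to the negative half of (i) via $X \ominus \omega = F(X)$ is correct and economical; the paper does not argue this way but simply cites both non-definability claims from \cite{pd_acfunc}, so there is no in-paper argument to compare against for the hard direction.

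The substantive gap is the one you flag yourself, and it is fatal as stated: the proposed normal-form step --- that a completely additive, inflationary, circuit-definable unary operator must be definable by a positive term --- is false. Consider
\[
N(X) \;=\; X \,\cup\, \Bigl(\set{1} \cap d\bigl(X \cap (\omega \mal (\set{1}\plus\set{1}))\bigr)\Bigr),
\qquad d(a) \df \omega \plus (\set{0}\mal a),
\]
where $d$ is the discriminator term from the proof of Theorem~\ref{thm:disc0}. Thus $N(X)=X\cup\set{1}$ when $X$ meets the even numbers and $N(X)=X$ otherwise. One checks directly that $N$ is normal, completely additive, and inflationary, and it is visibly circuit-definable. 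Yet $N$ does not commute with your $\rho_3$: we have $\rho_3(\set{0})=\rho_3(\set{3})=\set{0}$, while $\rho_3(N(\set{0}))=\rho_3(\set{0,1})=\set{0,1}$ and $\rho_3(N(\set{3}))=\rho_3(\set{3})=\set{0}$. Hence $N$ admits no positive defining term, and the hoped-for normal form fails even under the inflationary hypothesis. Since your modular contradiction relies entirely on that reduction, the argument for the non-definability of $\leq$ does not go through along these lines; a genuinely different technique is required (and is supplied in the cited paper, not here).
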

\begin{proof}
Using \eqref{cm} it is easily seen that $\geq$ is the relation corresponding to the function defined by $f(x) = x \plus \omega$. The ordering $\leq$ on $\omega$ corresponds to the function defined by $f(x) = \set{n \in \omega: (\exists m)[m \in x \tand n \leq m}$, and we have shown in \cite{pd_acfunc} that this function is not circuit definable. In the same paper we have proved (ii).
\end{proof}

\subsection{Complex algebras of $\klam{\omega, +, 0}$}

Let $\N^{\plus} = \klam{\omega,+,0,}$, $\CmN^{\plus}$ be its full complex algebra, and $\V$ be the variety generated by $\CmN^{\plus}$. Furthermore, set $c(x) = x \plus \omega$. Recall that the constant $\set{1}$ is definable in $\CmN^{\plus}$ by
\begin{gather*}
\set{1} = \omega \setminus ((\omega \setminus \set{0} \plus \omega \setminus \set{0}) \cup \set{0}).
\end{gather*}
Note that for all $a \subseteq \omega$,
\begin{align}
&c(a) = a \plus \omega = \set{k: (\exists n, m)[m \in a \tand k = m + n ]} = \set{k: \min(a) \leq k} = c(\set{\min(a)}), \\
&c(a) + \set{1} = a \plus c(\set{1}) = a \plus \cmp{\set{0}}.
\end{align}

The following observation will be useful:
\begin{lemma}\label{lem:fco}
Let $a,b \subseteq \omega$. Then, $a = \z$ or $b = \z$ \tiff $c(a) \cap c(b) = \z$.
\end{lemma}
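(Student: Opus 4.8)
The plan is to prove the two directions separately, using the two facts already in hand: that $c(a) = c(\set{\min(a)}) = \set{k : \min(a) \le k}$ for every nonempty $a$, and that the complex sum $x \plus \omega$ collapses to $\z$ as soon as its left factor is empty.

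First I would dispose of the forward implication. If $a = \z$, then $c(a) = \z \plus \omega = \z$, because the defining set $\set{n + m : n \in \z,\ m \in \omega}$ has no elements; hence $c(a) \cap c(b) = \z \cap c(b) = \z$, irrespective of $b$. The case $b = \z$ is symmetric. This direction needs nothing beyond the empty--set convention for $\plus$.

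For the converse I would argue by contraposition, assuming $a \neq \z$ and $b \neq \z$. By the recorded identity, $c(a) = \set{k : \min(a) \le k}$ and $c(b) = \set{k : \min(b) \le k}$ are both final segments of $\omega$, so $c(a) \cap c(b) = \set{k : \max(\min(a), \min(b)) \le k}$. As this is again a final segment of $\omega$, it is nonempty, whence $c(a) \cap c(b) \neq \z$. Contraposing yields the claim.

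The argument is essentially routine; the only step that calls for any attention is the empty--set convention that makes $c(\z) = \z$, since it is exactly this asymmetry between the empty and nonempty cases that the biconditional captures. The geometric fact underlying the converse --- that any two upward--closed rays in $\omega$ meet in a further such ray --- is immediate and poses no real obstacle.
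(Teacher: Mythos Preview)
Your proof is correct and follows essentially the same approach as the paper's. The only difference is cosmetic: for the converse the paper argues that two cofinite sets must have nonempty intersection, while you compute the intersection of the two final segments explicitly as $\set{k : \max(\min(a),\min(b)) \le k}$; the underlying idea is identical.
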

\begin{proof}
If, say, $a = \z$, then $c(a) = \z$. Conversely, if $c(a) \cap c(b) = \z$, then one of $c(a)$ or $c(b)$ must be empty, since the intersection of any two cofinite sets is not empty. Hence, $a = \z$ or $b = \z$.
\end{proof}

Recall that $\CmNN^{\plus}$ is the smallest subalgebra of $\CmN^{\plus}$. The following result is well known:
\begin{lemma}\label{lem:fc}
The universe of $\CmNN^{\plus}$ is the finite -- cofinite subalgebra of $2^\om$.
\end{lemma}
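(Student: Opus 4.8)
The plan is to show the two inclusions directly. Recall that $\CmNN^{\plus}$ is the smallest subalgebra of $\CmN^{\plus}$, i.e.\ the closure of the constant $\set{0}$ under the operations $\cup,\cap,{}^-,\plus$ (together with $\z$ and $\omega$), and that $\set{1}$ is definable as noted above. First I would establish that every element of $\CmNN^{\plus}$ lies in the finite--cofinite algebra $FC(2^\omega)$. This is a closure argument: $\z$, $\omega$, and $\set{0}$ are all finite or cofinite; $FC(2^\omega)$ is by definition closed under the Boolean operations; so the only thing to check is that $FC(2^\omega)$ is closed under $\plus$. Here I would argue that if $a$ is finite and $b$ is finite then $a \plus b$ is finite, while if either of $a,b$ is cofinite (and both nonempty), then $a \plus b$ is cofinite, since $c(a) = a \plus \omega = \set{k : \min(a) \le k}$ shows that adding any nonempty set pushes one past a threshold. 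A short case analysis on finite/cofinite/empty for the two arguments closes this direction.

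For the reverse inclusion, I would show that every finite or cofinite set already belongs to $\CmNN^{\plus}$. Since $\CmNN^{\plus}$ is a Boolean subalgebra, it suffices to produce the singletons $\set{n}$ and the cofinite ``tails'' as elements, because every finite set is a finite union of singletons and every cofinite set is the complement of a finite one. The singleton $\set{1}$ is available by the displayed definition, and then $\set{n}$ for $n>0$ is obtained as the $n$--fold sum $\set{1}\plus\cdots\plus\set{1}$ exactly as in the proof of Theorem~\ref{thm:disc0}(ii); together with $\set{0}$ this gives all singletons, hence all finite sets, hence all cofinite sets by complementation.

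The step I expect to require the most care is the closure of $FC(2^\omega)$ under $\plus$, specifically the cofinite case: I must verify that $a \plus b$ really is cofinite whenever one argument is cofinite and neither is empty, and pin down exactly which finitely many small values can be missing. The identity $c(a) = \set{k : \min(a) \le k}$ from the excerpt is the key tool, since it reduces $a\plus\omega$ to a single tail and lets me bound the complement. Everything else is routine bookkeeping over the finite/cofinite/empty cases, so I would present the $\plus$--closure as the one genuine lemma and treat the Boolean closure and the singleton generation as immediate.
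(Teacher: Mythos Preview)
Your argument is correct. Note, however, that the paper does not actually supply a proof of this lemma: it is introduced with ``The following result is well known'' and left unproved. Your two-inclusion argument is exactly the standard verification one would expect---closure of $FC(2^\omega)$ under $\plus$ via the finite/cofinite/empty case split, and generation of all singletons from $\set{0}$ and $\set{1}$---so there is nothing to compare against, and your write-up stands on its own. One small remark: you do not need to ``pin down exactly which finitely many small values can be missing'' in the cofinite case; it suffices to observe that if $a$ is cofinite and $b\neq\emptyset$, then $a\plus b \supseteq a \plus \set{\min(b)}$, which already contains a tail $[N+\min(b),\infty)$ once $a\supseteq[N,\infty)$.
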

Next, we describe the congruences of $\CmN^{\plus}$:

\begin{thm}\label{thm:Cn+}
The congruences of $\CmN^{\plus}$ form a chain of order type $1+ \omega^*$.
\end{thm}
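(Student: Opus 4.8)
The plan is to use the fact, provided by Lemma \ref{lem:cong}, that $\CmN^{\plus}$ is (the expansion of) a commutative Boolean monoid with $\circ = \plus$ and $e = \set{0}$, so that its congruence lattice is isomorphic to the lattice of \emph{congruence ideals}, namely the Boolean ideals $I$ that are closed under the closure operator $c(x) = x \plus \om$. I would therefore reduce the whole computation to understanding these ideals, and in particular the \emph{congruence elements}, i.e.\ the fixed points of $c$.

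First I would pin down the congruence elements. The computation preceding Lemma \ref{lem:fco} gives $c(a) = c(\set{\min(a)}) = \set{k : \min(a) \le k}$ for $a \neq \z$, and $c(\z) = \z$. Since $c$ is a closure operator, its set of fixed points equals its range, which is therefore exactly $\set{\z} \cup \set{[m,\infty) : m \in \om}$, where $[m,\infty) = c(\set{m})$. Under $\subseteq$ these form a chain $\z \subsetneq \cdots \subsetneq [2,\infty) \subsetneq [1,\infty) \subsetneq [0,\infty) = \om$; I would observe that this chain $C$ has order type $1 + \om^*$, the bottom element being $\z$ and the order-reversing bijection $m \mapsto [m,\infty)$ identifying the remaining elements with a copy of $\om^*$ sitting above $\z$ (greatest element $\om$, but no least element, as $\bigcap_m [m,\infty) = \z$ is never attained).

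Next I would show that every congruence ideal is \emph{principal}, generated by its greatest congruence element. Because $x \le c(x)$ and an ideal is downward closed while also being closed under $c$, one has $x \in I$ iff $c(x) \in I$; hence for any $y \in I$ the fixed point $c(y)$ lies in $I$, and $I$ is the Boolean downward closure of the set $C \cap I$ of fixed points it contains. That set is a down-set of the chain $C$, so its index set $\set{m : [m,\infty) \in I}$ is an up-set of $\om$ and, by well-ordering, has a least element $m^\ast$; then $[m^\ast,\infty)$ is the greatest element of $C \cap I$ and $I = \mathord{\downarrow}[m^\ast,\infty)$ (or $I = \set{\z}$ if no $[m,\infty)$ lies in $I$). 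Conversely each $\mathord{\downarrow}x$ with $x \in C$ is a congruence ideal, since $y \le x$ forces $c(y) \le c(x) = x$. Thus $x \mapsto \mathord{\downarrow}x$ is an order isomorphism from $C$ onto the congruence lattice, which is consequently the chain $\set{\z}$ (the identity congruence) followed by the $\om^\ast$-chain of the ideals $\mathord{\downarrow}[m,\infty)$ (with $\mathord{\downarrow}[0,\infty) = 2^\om$ the full congruence), of order type $1 + \om^\ast$, as required.

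I expect the main obstacle to be the principality argument of the third paragraph: verifying the equivalence $x \in I \iff c(x) \in I$, establishing that $I$ coincides with the downward closure of $C \cap I$, and then invoking the well-ordering of $\om$ to produce the generating congruence element. By contrast, identifying the congruence elements is immediate from the displayed computation, and the order-type bookkeeping is routine once the isomorphism with $C$ is in place.
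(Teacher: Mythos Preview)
Your proposal is correct and follows essentially the same route as the paper: both reduce to classifying congruence ideals via Lemma~\ref{lem:cong}, identify the congruence elements as $\z$ together with the sets $c(\set{n}) = [n,\infty)$, and use the well-ordering of $\omega$ to show every nontrivial congruence ideal is principal, generated by $c(\set{m^\ast})$ for the least admissible $m^\ast$. The only cosmetic difference is that the paper extracts $m^\ast$ directly as $\min\set{\min(a): a \in I,\ a \neq \z}$ rather than first passing through the chain of fixed points.
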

\begin{proof}
By Lemma \ref{lem:cong}, $c(\set{n})$ is a congruence element generating the congruence $\theta_n$. Conversely, suppose that $\equiv$ is a congruence induced by the non--trivial ideal $I$; then, $I \neq \z$, and $I$ is closed under $c$. Since $I$ is also closed under $\subseteq$, $\set{\min(a)} \in I$ for every $a \in I$, and therefore, $n \df \min(\set{\min(a): a \in I, \ a \neq \z})$ exists, and $c(\set{n}) \in I$. If $a \in I, \ a \neq \z$, then $n \leq \min(a)$, and it follows that $a \subseteq c(a) = c(\set{\min(a)}) \subseteq c(\set{n})$. Hence, $I$ is the principal ideal of $2^\omega$ generated by $c(\set{n})$.

Observing that $c(\set{n}) = \set{m: n \leq m}$, we see that
\begin{gather*}
\z \subsetneq \ \ldots \subsetneq c(\set{n+1}) \subsetneq c(\set{n}) \subsetneq \ldots \subsetneq c(\set{1}) \subsetneq c(\set{0}) = \omega,
\end{gather*}
and thus,
\begin{gather}\label{cchain}
1' \subsetneq \ldots \subsetneq \theta_{n+1} \subsetneq \theta_n \subsetneq \ldots \subsetneq \theta_1 \subsetneq \theta_0 = V,
\end{gather}
where $1'$ is the identity and $V$ the universal congruence. Clearly, this chain has order
type $1 + \omega^*$. It follows that $\CmN^+$ has no smallest nontrivial congruence, and therefore,
$\CmN^+$ is not subdirectly irreducible.
\end{proof}

\begin{cor}\label{cor:fccong}
The congruences of $\CmNN^{\plus}$ form a chain of order type $1+ \omega^*$.
\end{cor}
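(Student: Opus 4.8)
The plan is to repeat, essentially verbatim, the argument of Theorem~\ref{thm:Cn+}, but carried out inside the subalgebra $\CmNN^{\plus}$ rather than in the full algebra $\CmN^{\plus}$. The only thing that genuinely needs checking is that every element used as a witness in that proof already lies in the finite--cofinite subalgebra, so that the entire chain of congruence ideals survives the passage to $\CmNN^{\plus}$.

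First I would observe that, by Lemma~\ref{lem:fc}, $\CmNN^{\plus}$ is the finite--cofinite subalgebra of $2^\om$; it is closed under $\plus$ and contains $\set{0}$, hence is itself a sub--CBM, so Lemma~\ref{lem:cong} applies to it directly. In particular $c(x) = x \plus \om$ is the associated closure operator, and $c(\set{n}) = \set{m : n \le m}$ is cofinite, so each $c(\set{n})$ is a congruence element of $\CmNN^{\plus}$; these generate congruences $\theta_n$, and the sets $c(\set{n})$ form a strictly descending chain from $\om$ down towards $\z$ exactly as in the proof of Theorem~\ref{thm:Cn+}. Conversely, given a nontrivial congruence ideal $I$ of $\CmNN^{\plus}$, I would pick any nonzero $a \in I$: then $\set{\min(a)}$ is finite, hence a member of $\CmNN^{\plus}$, and $\set{\min(a)} \subseteq a$ gives $\set{\min(a)} \in I$. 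Setting $n = \min\set{\min(a) : a \in I,\ a \ne \z}$ yields $\set{n} \in I$, and closure of $I$ under $c$ gives $c(\set{n}) \in I$, the cofinite set $c(\set{n})$ again lying in $\CmNN^{\plus}$. Since every $a \in I$ satisfies $a \subseteq c(a) = c(\set{\min(a)}) \subseteq c(\set{n})$, the ideal $I$ is precisely the principal ideal generated by $c(\set{n})$. Thus the congruences of $\CmNN^{\plus}$ are exactly the $\theta_n$ together with the identity, forming a chain of order type $1 + \om^*$.

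The one methodological point I would stress --- and the only place an error could plausibly creep in --- is that one must \emph{not} argue by simply restricting congruences of $\CmN^{\plus}$ to the subalgebra, since restriction of congruences is in general neither injective nor surjective on congruence lattices. The reason the corollary follows so cleanly is precisely that all the witnesses appearing in the proof of Theorem~\ref{thm:Cn+}, namely the finite singletons $\set{n}$ and the cofinite up--sets $c(\set{n})$, already belong to the finite--cofinite algebra. Consequently there is no real obstacle here beyond this membership verification, which is exactly why the result is stated as a corollary.
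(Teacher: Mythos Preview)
Your proposal is correct and follows essentially the same approach as the paper. The paper's proof is a single sentence noting that each congruence $\theta_n$ of $\CmN^{\plus}$ is generated by a cofinite congruence element, which lies in $\CmNN^{\plus}$ by Lemma~\ref{lem:fc}; your version makes explicit what the paper leaves implicit, namely that the entire argument of Theorem~\ref{thm:Cn+} can be rerun inside the finite--cofinite subalgebra because every witness it uses (the singletons $\set{n}$ and the cofinite sets $c(\set{n})$) already belongs there. Your methodological remark about not naively restricting congruences is well taken and clarifies why the terse one-line justification in the paper is actually sufficient.
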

\begin{proof}
Each congruence $\theta_n$ of $\CmN^{\plus}$ is generated by a cofinite congruence element,
which is in $\CmNN^{\plus}$ by Lemma \ref{lem:fc}.
\end{proof}

Let $\B_{n} \df \CmN^{\plus}/\theta_{n+1}$, and $\pi_n: \CmN^{\plus} \onto \B_n$ be the quotient mapping. Note that the kernel of $\theta_{n+1}$ is the ideal of $2^\omega$ generated by $c(\set{n+1}) = \set{n+1} \plus \omega = \omega \setminus [0,n]$. Thus, the Boolean part of $\B_n$ is isomorphic to the powerset algebra of $\set{0, \ldots, n}$ with atoms $g_i \df \pi_n(\set{i})$ for $i \leq n$. In particular, $\B_0$ is isomorphic to the two element Boolean algebra, since $c(\set{1}) = \omega \setminus \set{0}$ generates a prime ideal of $2^\omega$.

The composition table for $\circ$ on the atoms of $\B_n$ is given below. Observe that $g_0 = \pi_n(\set{0})$ is the identity element $e$ of $\klam{\B_n,\circ}$, and $g_m = g_0 \circ \underbrace{g_1 \circ \ldots\circ g_1}_{\text{m -- times}}$.

$$
\begin{array}{c|cccccc}
\circ & {g_0} & {g_1} & g_2 & \ldots & g_{n-1} & g_n \\ \hline
g_0 & g_0 & {g_1} & g_2 & \ldots & g_{n-1} & g_n \\
g_1 & g_1 & g_2 & g_3 &\ldots & g_n & \bot \\
g_2 & g_2 & g_3 & g_4 & \ldots & \bot & \bot \\
\ldots \\
g_n & g_n & \bot & \bot & \ldots & \bot & \bot
\end{array}
$$


\begin{thm}
\begin{enumerate}[label=\textup{(}\emph{\roman*}\textup{)}]
\item Each $\B_n$ is subdirectly irreducible.
\item $\Var(\B_n) \subsetneq \Var(\B_{n+1})$.
\item $\V = \Var \set{\B_n: n \in \om}$, and thus, $\V$ is generated by its finite members.
\end{enumerate}
\end{thm}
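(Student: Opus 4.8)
The plan is to handle the three parts in turn, using the description of the congruence lattice of $\CmN^{\plus}$ from Theorem~\ref{thm:Cn+}, the characterisation of congruence ideals in Lemma~\ref{lem:cong}, and the composition table for the atoms $g_0,\ldots,g_n$ of $\B_n$ (with $g_0 = e$ and $g_i\circ g_j = g_{i+j}$ when $i+j\le n$, $\bot$ otherwise). For (i), I would pin down the least nontrivial congruence of $\B_n$ through its congruence ideals. Since $\circ$ is additive, $c(g_i) = g_i\circ\top = g_i\lor g_{i+1}\lor\cdots\lor g_n$; in particular $c(g_n)=g_n$, so $g_n$ is a congruence element and $\set{\bot,g_n}$ is a congruence ideal. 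Conversely, any nonzero congruence ideal $I$ is a nonempty Boolean ideal, hence contains some atom $g_i$, and then contains $c(g_i)\ge g_n$, so $g_n\in I$. Thus $\set{\bot,g_n}$ is contained in every nonzero congruence ideal; it induces the smallest nontrivial congruence, and $\B_n$ is subdirectly irreducible.

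For (ii), the inclusion $\Var(\B_n)\subseteq\Var(\B_{n+1})$ follows by realising $\B_n$ as a homomorphic image of $\B_{n+1}$: since $\theta_{n+2}\subsetneq\theta_{n+1}$, the isomorphism theorem gives $\B_n\cong\B_{n+1}/(\theta_{n+1}/\theta_{n+2})$, so $\B_n\in\hom(\B_{n+1})\subseteq\Var(\B_{n+1})$. For strictness I would exhibit a single closed equation separating the two algebras. In $\B_n$ we have $\cmp e = g_1\lor\cdots\lor g_n$, and an easy induction on the table gives $\cmp e^{\,\circ k} = g_k\lor\cdots\lor g_n$; hence the $(n+1)$-fold composite $\underbrace{\cmp e\circ\cdots\circ\cmp e}_{n+1}$ equals $\bot$ in $\B_n$ but equals $g_{n+1}\neq\bot$ in $\B_{n+1}$. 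Therefore this equation holds in $\B_n$ and fails in $\B_{n+1}$, yielding $\Var(\B_n)\subsetneq\Var(\B_{n+1})$.

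For (iii), one inclusion is immediate: each $\B_n$ is a quotient of $\CmN^{\plus}$, so $\Var\set{\B_n:n\in\om}\subseteq\V$. For the reverse I would use a subdirect decomposition. The kernel ideal of $\theta_{n+1}$ consists of the subsets of $\om\setminus[0,n]$, so a pair $(a,b)$ lies in $\bigcap_n\theta_{n+1}$ exactly when $a\cap[0,n]=b\cap[0,n]$ for every $n$, i.e. when $a=b$; thus $\bigcap_n\theta_{n+1}=1'$. By the subdirect representation theorem, $\CmN^{\plus}$ embeds subdirectly into $\prod_n\B_n$, so $\CmN^{\plus}\in\sub\prods\set{\B_n}\subseteq\Var\set{\B_n}$, whence $\V=\Var(\CmN^{\plus})\subseteq\Var\set{\B_n}$. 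Since every $\B_n$ is finite, this simultaneously shows that $\V$ is generated by its finite members.

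The inclusion in (ii) and the decomposition in (iii) are essentially bookkeeping on the congruence chain already established. The genuine obstacle is the strictness in (ii): one must produce an identity that actually separates the two varieties, and the natural candidate is the nilpotency-type equation $\cmp e^{\,\circ(n+1)}=\bot$, which records that the ``augmentation'' element $\cmp e$ is nilpotent of degree exactly $n+1$ in $\B_n$ but not in $\B_{n+1}$.
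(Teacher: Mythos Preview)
Your proof is correct and follows essentially the same route as the paper. The only cosmetic differences are: in (i) you work directly inside $\B_n$ using Lemma~\ref{lem:cong} rather than invoking the correspondence between congruences of $\B_n$ and congruences of $\CmN^{\plus}$ above $\theta_{n+1}$; in (ii) your separating closed term is $\cmp{e}^{\,\circ(n+1)}$ whereas the paper uses $g^{\,\circ(n+1)}$ with $g$ the closed term for $\set{1}$ (both vanish in $\B_n$ and survive in $\B_{n+1}$, so either works); and in (iii) you verify $\bigcap_n\theta_{n+1}=1'$ by hand, while the paper simply quotes Birkhoff's subdirect representation by subdirectly irreducible quotients together with the classification of those quotients from Theorem~\ref{thm:Cn+}.
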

\begin{proof}
(i) \ The congruences of $\B_n$ are in 1--1 correspondence to the congruences of
$\CmN^{\plus}$ containing $\theta_n$. This is a finite chain, and the smallest nonzero congruence element of $\B_n$ is $g_n$.

(ii) \ Clearly, $\Var(\B_n) \subseteq \Var(\B_{n+1})$. In $\B_n$, $\underbrace{g_1 \plus \ldots \plus g_1}_{n+1 \text{ times}} = \bot$, and $\underbrace{g_1 \plus \ldots \plus g_1}_{n+1 \text{ times}} = g_{n+1} \neq \bot$ in $\B_{n+1}$.

(iii) \ Clearly, $\B_n \in \V$ for each $n \in \om$. Conversely, by Birkhoff's subdirect representation theorem \cite{bir35}, $\CmN^{\plus}$ is isomorphic to a subdirect product of its subdirectly irreducible quotients, see e.g. \cite{bs_ua}, Theorem 8.6. By Theorem \ref{thm:Cn+}, the only proper quotients of $\CmN^{\plus}$ are the algebras $\B_n$, and these are subdirectly irreducible by 1. above.
\end{proof}

$\V$ contains all Boolean algebras for which the extra operator $\circ$ is the Boolean meet and  $e = \top$, since the universe of $\B_0$ is the two element Boolean algebra, and $\B_0 \in \V$. Moreover,

\begin{thm}
$\Var(\B_n)$ is finitely based for each $n \in \om$. Hence, $\Eq\B_n$ is decidable for all $n \in \om$.
\end{thm}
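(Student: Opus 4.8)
The plan is to derive the finite basis property from Baker's finite basis theorem \cite{bs_ua}, which asserts that a congruence--distributive variety generated by a finite algebra of finite type is finitely based; the proof then reduces to verifying the two hypotheses for $\Var(\B_n)$. First I would record that $\B_n$ is finite and of finite type: its Boolean reduct is the powerset algebra on $\set{0,\ldots,n}$, so that $\card{\B_n} = 2^{n+1}$, and its operations are the finitely many Boolean ones together with $\circ$ and the constant $e$.

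Second, I would show that $\Var(\B_n)$ is congruence--distributive. Since CBM is a variety (defined by equations) and $\B_n$ is a homomorphic image of the CBM $\CmN^{\plus}$, the algebra $\B_n$ is itself a CBM; consequently $\Var(\B_n) = \hom\sub\prods(\B_n)$ is a subvariety of CBM. By Lemma \ref{lem:cong}(i) the variety CBM is congruence--distributive, and congruence--distributivity --- being equivalent to the existence of a finite system of J\'onsson terms, hence a Mal'cev condition --- is inherited by every subvariety. Thus $\Var(\B_n)$ is congruence--distributive, and Baker's theorem applies directly to yield a finite equational basis.

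For the decidability of $\Eq\B_n$, I would use that $\Eq\B_n = \Eq\Var(\B_n)$ together with the finiteness of $\B_n$. An equation $\tau(\vec{x}) = \sigma(\vec{x})$ in $k$ variables belongs to $\Eq\B_n$ if and only if it holds under each of the $\card{\B_n}^{k}$ assignments of elements of $\B_n$ to $\vec{x}$; since evaluating a term under a fixed assignment is a finite computation, this membership test is effective, and $\Eq\B_n$ is decidable. (Equivalently, the finite basis furnishes, via Birkhoff completeness, a recursive enumeration of $\Eq\B_n$, while finiteness of $\B_n$ makes its complement recursively enumerable.)

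The development contains no genuine combinatorial difficulty: the finite basis is handed to us by Baker's theorem, and decidability by the finiteness of $\B_n$. The only step demanding care is the passage from ``each $\B_n$ is congruence--distributive'' to ``the whole variety $\Var(\B_n)$ is congruence--distributive,'' which is precisely where the closure of congruence--distributivity under subvarieties of CBM is needed.
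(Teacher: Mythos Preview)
Your proposal is correct and follows essentially the same route as the paper: both invoke Baker's finite basis theorem, using that $\B_n$ is a finite algebra in the congruence--distributive variety CBM (Lemma~\ref{lem:cong}(i)), and both derive decidability of $\Eq\B_n$ from the finiteness of $\B_n$. Your argument is in fact slightly more explicit than the paper's, spelling out the subvariety step and observing that decidability requires only finiteness of $\B_n$ (the finite basis is not needed for that part).
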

\begin{proof}
Since $\Var(\B_n)$ is congruence distributive and $\B_n$ is finite, Baker's finite basis theorem \cite{baker77} implies that $\Var(\B_n)$ is finitely based for each $n \in \om$. The second claim follows from the fact that a finitely based variety which is generated by a finite algebra  has a decidable equational theory.
\end{proof}

\begin{cor}\label{cor:core}
$\Eq\V$ is co -- r.e.
\end{cor}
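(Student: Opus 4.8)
The plan is to combine the preceding theorem, which establishes that $\V = \Var\set{\B_n : n \in \om}$, with the decidability of $\Eq\B_n$ for each individual $n$. The pivotal reduction is that the equational theory of a variety coincides with the equational theory of any generating class. Since $\V = \hom\sub\prods\set{\B_n : n \in \om}$ and each of the operators $\hom$, $\sub$, $\prods$ preserves the satisfaction of identities, an identity holds throughout $\V$ exactly when it holds in every $\B_n$ (this is the content of Birkhoff's variety theorem). Thus I would first record the equivalence
\begin{gather*}
(\tau = \sigma) \in \Eq\V \quad\Iff\quad \B_n \models (\tau = \sigma) \text{ for every } n \in \om.
\end{gather*}

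Next I would observe that the right-hand condition is a $\Pi_1$ predicate with a uniformly decidable matrix. For each fixed $n$ the algebra $\B_n$ is finite and has an explicit, uniformly computable description: its Boolean part is the powerset of $\set{0,\ldots,n}$, and the action of $\circ$ on the atoms $g_0, \ldots, g_n$ is given by the composition table displayed above. Consequently, given $n$ together with an equation $\epsilon$, one decides $\B_n \models \epsilon$ by evaluating both sides of $\epsilon$ under each of the finitely many assignments of the variables to elements of $\B_n$ and comparing the results; this procedure is uniform in both $n$ and $\epsilon$.

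From here the conclusion is immediate. Membership in $\Eq\V$ is the universally quantified statement ``$\B_n \models \epsilon$ for all $n$'' over a decidable matrix, which is precisely a co--r.e.\ predicate. Equivalently, the complement of $\Eq\V$ is recursively enumerable: an equation $\epsilon$ fails in $\V$ if and only if there is some $n$ with $\B_n \not\models \epsilon$, and such a witness is found by systematically testing $\B_n \models \epsilon$ for $n = 0, 1, 2, \ldots$ until a counterexample appears. Hence $\Eq\V$ is co--r.e.

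There is essentially no hard step here; the statement is a direct consequence of the generation theorem for $\V$. The only point requiring a moment's care is the passage from ``$\V$ is generated by the $\B_n$'' to ``$\Eq\V$ equals the intersection of the $\Eq\B_n$'' -- that is, the fact that closing under $\hom\sub\prods$ introduces no new valid identities -- which is exactly the identity-theoretic side of Birkhoff's characterization of varieties.
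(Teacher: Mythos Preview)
Your proof is correct and follows essentially the same approach as the paper: both argue that since $\V$ is generated by the $\B_n$, an equation fails in $\V$ iff it fails in some $\B_n$, and failures can be searched for because each $\B_n$ is finite and explicitly presented. You are somewhat more careful than the paper in stressing the \emph{uniform} decidability of $(n,\epsilon) \mapsto [\B_n \models \epsilon]$ via direct evaluation in the finite algebra, whereas the paper simply invokes the decidability of each $\Eq\B_n$ from the preceding theorem; this extra care is warranted, since per-instance decidability alone does not automatically yield a uniform procedure.
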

\begin{proof}
Given an equation $\tau = \sigma$ we can check whether $\tau = \sigma$ holds in $\B_0, \B_1, \ldots, $, since $\Eq B_n$ is decidable. Since $\V$ is generated by $\set{\B_n: n \in \om}$, any equation that fails in $\V$ must fail in some $\B_n$.
\end{proof}

Let $g$ be the term
\begin{gather}\label{def:g}
g \df \cmp{\cmp{e} \circ \cmp{e}} \land \cmp{e}.
\end{gather}
In $\CmN^{\plus}$, $g$ evaluates to $\set{1}$. Furthermore, we set
\begin{gather*}
g^n \df
\begin{cases}
e, &\text{if } n = 0, \\
\underbrace{g \circ g \circ \ldots \circ g}_{n-\text{times}}, &\text{otherwise.}
\end{cases}
\end{gather*}

Consider the following identities in the language of $\V$: 
\begin{align}
& \label{ge0} e \land (x \circ y) = e \land x \land y. \\
& \label{ge1} c(g^{n+1}) = \cmp{g^0 \lor \ldots \lor g^n} \text{ for all } n \in \omega. \\
& \label{ge2} c[c(x) \land \cmp{c(y)}] \land c[c(y) \land \cmp{c(x)}] = \bot.  \\
& \label{ge4} g \land (x \circ y) = [(e \land x) \circ (g \land y)] \lor [(g \land x) \circ (e \land y)] \\
& \label{ge5} (x \land g^n) \circ (\cmp{x} \land g^n) = \bot \text{ for all } n \in \omega. \\
& \label{ge7} c(x) = c(x \land \cmp{x \circ \cmp{e}}).
\end{align}
\begin{lemma}
\eqref{ge0} -- \eqref{ge7} hold in $\CmN^{\plus}$, and thus, in $\V$.
\end{lemma}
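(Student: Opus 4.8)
The plan is to verify each of the six identities directly in the concrete algebra $\CmN^{\plus}$ and then invoke the fact that an equation holding in an algebra holds throughout the variety it generates. The whole computation rests on three concrete readings of the relevant terms, which I would record at the outset: the monoid operation $\circ$ is interpreted as complex addition $\plus$, the unit is $e = \set{0}$ with powers equal to the atoms $g^n = \set{n}$, and for every nonempty $a \subseteq \om$ the closure $c(a) = a \plus \om$ is the up-set $\set{k : \min(a) \leq k}$, while $c(\z) = \z$. Each of the six verifications then reduces to elementary reasoning about membership of a single natural number or about minima of sets.

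For \eqref{ge0}, \eqref{ge4} and \eqref{ge5} I would argue pointwise on atoms. Since $e$, $g$ and each $g^n$ are singletons, meeting with them isolates membership of $0$, $1$ or $n$, and the asserted equalities reduce to the observation that $k \in a \plus b$ with $k \in \set{0,1}$ forces the summands into prescribed singletons: e.g. $0 \in x \plus y$ iff $0 \in x$ and $0 \in y$, giving \eqref{ge0}, and $1 \in x \plus y$ iff $0 \in x, 1 \in y$ or $1 \in x, 0 \in y$, matching the two disjuncts of \eqref{ge4}. Identity \eqref{ge5} follows because $g^n = \set{n}$ is an atom, so one of $x \land g^n$ and $\cmp{x} \land g^n$ is always $\z$, whence the product is $\bot$ by normality of $\plus$. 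For \eqref{ge1} I would simply compute $c(g^{n+1}) = c(\set{n+1}) = \set{k : k \geq n+1}$ and note that this is exactly the complement of $\set{0,\ldots,n} = g^0 \lor \cdots \lor g^n$.

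The two identities needing slightly more care are \eqref{ge7} and \eqref{ge2}. For \eqref{ge7} the key observation is that $x \land \cmp{x \circ \cmp{e}}$ picks out precisely the singleton $\set{\min(x)}$: for $x \neq \z$ we have $x \circ \cmp{e} = x \plus \set{k : k \geq 1} = \set{k : k > \min(x)}$, so its complement meets $x$ in $\set{\min(x)}$, and then $c(\set{\min(x)}) = c(x)$ by the already-established identity $c(a) = c(\set{\min(a)})$ (the case $x = \z$ being trivial). For \eqref{ge2} I would apply Lemma \ref{lem:fco} with the inner terms $a = c(x) \land \cmp{c(y)}$ and $b = c(y) \land \cmp{c(x)}$: the outer meet $c(a) \land c(b)$ vanishes exactly when $a = \z$ or $b = \z$, and this always holds because $c(x)$ and $c(y)$ are up-sets, hence nested, so at least one of the two set differences is empty. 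This is the only step where the interplay of $c$ with the Boolean structure, rather than a bare atom computation, is essential, so I expect it to be the main (though still modest) obstacle.

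Finally, since $\V = \Var(\CmN^{\plus}) = \hom\sub\prods(\CmN^{\plus})$ and equations are preserved by $\hom$, $\sub$ and $\prods$, every identity holding in $\CmN^{\plus}$ holds throughout $\V$; this yields the ``and thus, in $\V$'' clause at once.
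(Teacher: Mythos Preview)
Your proposal is correct and follows essentially the same route as the paper: each identity is verified directly in $\CmN^{\plus}$ via the concrete readings $e=\set{0}$, $g^n=\set{n}$, $c(a)=\set{k:\min(a)\le k}$, with \eqref{ge0}, \eqref{ge4}, \eqref{ge5} reduced to atom/singleton arguments, \eqref{ge1} a direct computation of an up-set, \eqref{ge7} via $x\land\cmp{x\circ\cmp{e}}=\set{\min(x)}$, and \eqref{ge2} via the chain property of $\set{c(a):a\subseteq\omega}$ together with Lemma~\ref{lem:fco}. The closing appeal to preservation of equations under $\hom\sub\prods$ is exactly the intended justification for the ``and thus, in $\V$'' clause.
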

\begin{proof}
\eqref{ge0}: \ Just note that $0 \in a \plus b \Iff 0 \in a \tand 0 \in b$ so that $\set{0} \cap (a \plus b) \neq \z$ \tiff $\set{0} \cap a \neq \z$ and $\set{0} \cap b \neq \z$.

\eqref{ge1}: \ $c(\set{n+1}) = \set{n+1} \plus \omega = \uparrow\set{n+1} = \cmp{\set{0, \ldots,n}}$.

\eqref{ge2}: \ The set $\set{c(a): a \subseteq \omega}$ is a chain, thus, $c(x) \cap \cmp{c(y)} = \z$ or $c(y) \cap \cmp{c(x)} = \z$; hence,  $c(c(x) \cap \cmp{c(y)}) = \z$ or $c(c(y) \cap \cmp{c(x)}) = \z$. Now apply Lemma \ref{lem:fco}.

\eqref{ge4}: \ This follows immediately from the definition of $\plus$.

\eqref{ge5}: \ Each $g^n$ is an atom of $\CmNN^{\plus}$, so $x \land g^n = \bot$ or $\cmp{x} \land g^n = \bot$ for all $x \in \CmNN^{\plus}$.

\eqref{ge7}: \ If $a \subseteq \omega$ and $a = \z$, the claim clearly holds. If $a \neq \z$, then $a \cap \cmp{a \plus \cmp{\set{0}}} = \min(a)$ whence the conclusion follows.
\end{proof}

We do not know whether \eqref{ge0} -- \eqref{ge7} are sufficient to axiomatize $\V$.

\begin{thm}\label{thm:sip}
Let $\A \in \V$ be subdirectly irreducible and suppose that $d$ is the smallest nonzero congruence element in $\A$.
\begin{enumerate}[label=\textup{(}\emph{\roman*}\textup{)}]
\item $e$ is an atom of $A$.
\item The congruence elements of $\A$ are linearly ordered.
\item If \A\ is finite, then it is isomorphic to some $\B_n$.
\end{enumerate}
\end{thm}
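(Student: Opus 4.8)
The plan is to establish the three claims in the order (ii), (i), (iii), since the later parts rely on the earlier ones. Throughout I use the defining property of $d$. Because $\A$ is subdirectly irreducible, its monolith is the least nonzero congruence; by Lemma~\ref{lem:cong} congruences correspond to congruence ideals, and the least nonzero such ideal is principal, generated by a congruence element. Hence $d$ is well defined and satisfies $d \leq c(x)$ for every $x \neq \bot$: indeed $c(x)$ is then a nonzero congruence element, and $d$ is the smallest of these.

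For (ii), let $u,v$ be congruence elements, so $c(u)=u$ and $c(v)=v$. Substituting $x=u$, $y=v$ into \eqref{ge2} gives $c(u \land \cmp v) \land c(v \land \cmp u) = \bot$. Both factors lie in the range of the idempotent closure $c$, so both are congruence elements; were both nonzero they would each dominate $d$, making their meet dominate $d$ and hence be nonzero, contradicting \eqref{ge2}. So one factor vanishes, say $c(u \land \cmp v)=\bot$, whence $u \land \cmp v \leq c(u \land \cmp v)=\bot$, i.e. $u \leq v$; thus the congruence elements form a chain. For (i), note first that $e \neq \bot$, since otherwise \eqref{normal} would force $x = x \circ e = \bot$ for all $x$ and $\A$ would be trivial. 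Suppose $\bot < a < e$ and set $b = e \land \cmp a$, so $a,b$ are disjoint, nonzero, with $a \lor b = e$. Applying \eqref{ge0} with $y=\top$ yields $e \land c(a) = e \land a \land \top = a$, and likewise $e \land c(b)=b$. By (ii) the congruence elements $c(a)$ and $c(b)$ are comparable; if $c(a) \leq c(b)$ then $a = e \land c(a) \leq e \land c(b)=b$, forcing $a = a \land b = \bot$, a contradiction, and symmetrically if $c(b)\leq c(a)$. Hence nothing lies strictly between $\bot$ and $e$, so $e$ is an atom.

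For (iii), assume $\A$ finite and argue by induction on the top power $n$ of $g$ (the base $n=0$ giving $\top=g^0=e$ an atom, so $\A\cong\B_0$). From \eqref{ge1}, $c(g^{j}) = \cmp{g^0 \lor \cdots \lor g^{j-1}}$, so $g^{j}$ is disjoint from all lower powers; thus the nonzero powers are pairwise disjoint, and by finiteness there is a largest $n$ with $g^n \neq \bot$. Then $g^{n+1}=\bot$, so \eqref{ge1} gives $\cmp{g^0 \lor \cdots \lor g^n}=c(g^{n+1})=\bot$, i.e. $g^0 \lor \cdots \lor g^n = \top$. I claim each $g^j$ is an atom. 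For $g^n$: if $g^n = a \lor a'$ were a proper split into disjoint nonzero parts, then $a \circ g^i \leq g^{n+i}=\bot$ for $i \geq 1$, so $c(a)=a \circ \top = \bigvee_{i\leq n}(a\circ g^i)=a \circ e = a$, and likewise $c(a')=a'$; then $a,a'$ would be incomparable congruence elements, contradicting (ii). Since $g^n$ is thus an atom, $\set{x : x \leq g^n}=\set{\bot,g^n}$ is the least nonzero congruence ideal, so the principal congruence $\mu$ it induces is the monolith. Passing to $\A/\mu$ gives a finite member of $\V$ whose congruence lattice is the chain above $\mu$ (hence subdirectly irreducible) and whose top nonzero power of $g$ is $n-1$; by the induction hypothesis $\A/\mu \cong \B_{n-1}$, so each $\bar g^{j}$ $(j<n)$ is an atom there, which forces each $g^{j}$ $(j<n)$ to be an atom of $\A$. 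Consequently $g^0,\dots,g^n$ are exactly the atoms of $\A$, they partition $\top$, and by additivity \eqref{additive} the operation $\circ$ is determined on atoms by $g^i \circ g^j = g^{i+j}$ (read as $\bot$ when $i+j>n$) with unit $e=g^0$; this is precisely the composition table of $\B_n$, so $\A \cong \B_n$.

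The main obstacle is part (iii). The decisive observation is that a proper splitting of the top power $g^n$ manufactures two incomparable congruence elements, because every strictly higher product vanishes, and this is exactly what lets (ii) bite; the lower powers are then reached by induction through the monolith quotient. The delicate bookkeeping is to confirm that this quotient is again finite, subdirectly irreducible, and in $\V$, with its top power lowered by precisely one, and that atomicity of the $\bar g^{j}$ pulls back faithfully to $\A$.
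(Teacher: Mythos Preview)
Your argument is correct, and part (ii) is essentially the paper's proof. The differences lie in (i) and (iii).

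For (i), the paper proves it \emph{before} (ii) by a direct computation: from $a \lor b = e$, $a \land b = \bot$ it gets $a \circ b = \bot$ via \eqref{ge0}, then shows $d \circ a = d \circ b = \bot$ (using $d \leq c(a)$ and $d \leq c(b)$), whence $d = d \circ e = (d \circ a) \lor (d \circ b) = \bot$, a contradiction. Your route---first prove (ii), then use \eqref{ge0} with $y = \top$ to get $e \land c(a) = a$, $e \land c(b) = b$, and let comparability of $c(a), c(b)$ do the work---is cleaner and uses (ii) as the real engine.

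For (iii), the paper proceeds directly: after showing $g^0 \lor \cdots \lor g^n = \top$ and $d = g^n$, it proves each $g^m$ is an atom by computing $g^m \land c(s) = s$ for any $s \leq g^m$ (using $s \circ g^k \leq g^{m+k}$, which is disjoint from $g^m$ for $k \neq 0$), so a proper splitting $s \lor t = g^m$ yields incomparable congruence elements $c(s), c(t)$, contradicting (ii). Your inductive approach via the monolith quotient is a genuinely different decomposition. The ``delicate bookkeeping'' you flag is indeed routine: $\V$ is closed under quotients; (ii) plus finiteness makes the congruence lattice of $\A$ a finite chain, so $\A/\mu$ is again subdirectly irreducible; the kernel $\{\bot, g^n\}$ is disjoint from each $g^j$ ($j < n$), which both lowers the top power by exactly one and makes the quotient map injective on each interval $[\bot, g^j]$, so atomicity pulls back. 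Your approach is more structural and reusable; the paper's is more self-contained and avoids the quotient machinery.
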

\begin{proof}
(i) \  Assume that there are $a,b \in A$ such that $\bot \lneq a,b$, $a \land b = \bot$, and $a \lor b = e$. Then, the monotonicity of $\circ$ implies that $a \circ b \leq e \circ
e = e$, and by \eqref{ge0}, $a \circ b = (a \circ b) \land e = a \land b \land e = \bot$.

Since $a \neq \bot$, we have $a \circ \top \neq \bot$, and the fact that $d$ is the smallest non--zero congruence element implies $d \ \leq a \circ \top$. Now,
\begin{gather*}
d \leq a \circ \top \Implies d \circ b \leq a \circ b \circ \top = \bot \circ \top = \bot,
\end{gather*}
and, similarly, $d \circ a = \bot$. But then,
\begin{gather*}
d = d \circ e = d \circ (a \lor b) = (d \circ a) \lor (d \circ b) = \bot,
\end{gather*}
contradicting our hypothesis that $d \neq \bot$.

(ii) \ Assume there are nonzero congruence elements $x,y$ such that $x \land \cmp{y} \neq \bot$ and $y \land \cmp{x} \neq \bot$. Then, both $c(x \land \cmp{y})$ and $c(y \land \cmp{x})$ are nonzero congruence elements, and therefore, $d \leq c(x \land \cmp{y}) \land c(y \land \cmp{x})$. On the other hand,
\begin{gather*}
c(x \land \cmp{y}) \land c(y \land \cmp{x}) = c(c(x) \land \cmp{c(y)}) \land c(c(y) \land \cmp{c(x)}) = \bot
\end{gather*}
by \eqref{ge2} which contradicts $d \neq \bot$.

(iii) \ By \eqref{ge1}, $m \neq n$ implies that $g^m \land g^n = \bot$. Therefore, since $\A$ is finite, there exists a smallest $n$ such that $g^{n+1} = \bot$. We will prove that $\A = \B_n$.

\begin{enumerate}
\item $c(g^n) = g^n$: Again by \eqref{ge1} we have $g^0 \lor \ldots g^{n-1} \lor c(g^n) = \top$, and $c(g^n) \land g^m = \bot$ for all $m \lneq n$. Suppose there is some $s \in A$ such that $s \land g^n = \bot$ and $s \lor g^n = c(g^n)$. Then,
\begin{gather*}
g \circ s = \bot \lor (g \circ s) = (g \circ g^n) \lor (g \circ s) = g \circ (g^n \lor s) = g \circ c(g^n) = g \circ (g^n \circ \top) = (g \circ g^n) \circ \top = \bot,
\end{gather*}
and, by the normality of $\circ$ we obtain $g^m \circ s = \bot$ for all $1 \leq m \leq n$. Now,
\begin{gather*}
c(g^n) = g^n \circ (g^0 \lor \ldots \lor g^n \lor s) = (g^n \circ g^0) \lor \underbrace{g^n \circ g^1}_{ = \bot} \lor \ldots \lor
\underbrace{g^n \circ g^n}_{ = \bot} \lor \underbrace{g^n \circ s}_{ = \bot} = g^n.
\end{gather*}
It follows that $s = \bot$ and also that $g^0 \lor \ldots \lor g^n = \top$.
\item $d = g^n$: \ Since $d$ is the smallest congruence element, we have $d \leq g^n$. Assume there is some $t \neq \bot$ such that $d \land t = \bot $ and $d \lor t = g^n$. Then, for $x \in \set{d,t}$ and $y \in \set{g^1, \ldots g^n}$ we have $x \circ y = \bot$. Furthermore, $d \circ d = d \circ t = t \circ t = \bot$, since $d,t \leq g^n$ and $g^{n+1} = \bot$. This  implies that $d$ and $t$ are disjoint nonzero congruence elements, contradicting the subdirect irreducibility of $\A$. It follows that $d = g^n$.

\item Each $g^m$ is an atom of $\A$: Assume that there are $\bot \lneq s,t \lneq g^m$ with $s \land t = \bot$ and $s \lor t = g^m$ for some $m \leq n$. By (i) above, we have $1 \leq m$. From $s \leq g^m$ it follows that $s \circ g^k \leq g^m \circ g^k = g^{k+m} \neq g^ m$ for $k \neq 0$. Therefore,
\begin{gather*}
g^m \land (s \circ \top) = g^m \land (s \lor (s \circ g^1) \lor \ldots \lor (s \circ g^n)) = s.
\end{gather*}
Similarly we obtain $g^m \land (t \circ \top) = t$. Since $t$ and $s$ are nonzero and disjoint, $s \circ \top$ and $t \circ \top$ are incomparable congruence elements, contradicting (ii).
\end{enumerate}
\end{proof}

\begin{thm}
$\A\in \V$ is simple \tiff $\card{A} \leq 2$.
\end{thm}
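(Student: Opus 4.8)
The plan is to prove both implications, with the forward one carrying the real content.

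For the easy direction ``$\Leftarrow$'': an algebra with at most two elements has at most two congruences and so is simple (the one--element algebra being degenerate). Concretely, if $\card{A} = 2$ the Boolean reduct is the two--element Boolean algebra, and normality \eqref{normal} forces $e = \top$ (otherwise $e = \bot$ would give $x = x \circ e = x \circ \bot = \bot$ for every $x$, collapsing $A$) and $\circ = \land$; thus $\A \cong \B_0$, which is simple.

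For ``$\Implies$'', suppose $\A \in \V$ is simple. First I would record that the congruence elements of a simple algebra are exactly $\bot$ and $\top$: if $x$ is a congruence element, i.e. $c(x) = x$, then the simplicity criterion \eqref{simple} gives $x = \bot$ or $x = c(x) = \top$. In particular $\A$ is subdirectly irreducible with smallest nonzero congruence element $d = \top$, so Theorem \ref{thm:sip}(i) applies and $e$ is an atom of $\A$; being an atom, $e \neq \bot$.

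The key step is to recognise $\cmp{e}$ as a congruence element. Evaluating \eqref{ge1} at $n = 0$ gives $c(g) = c(g^1) = \cmp{g^0} = \cmp{e}$, and since $c$ is a closure operator it is idempotent, so $c(\cmp{e}) = c(c(g)) = c(g) = \cmp{e}$. Hence $\cmp{e}$ is a congruence element, and by the observation above $\cmp{e} \in \set{\bot,\top}$. As $e \neq \bot$ we have $\cmp{e} \neq \top$, forcing $\cmp{e} = \bot$, i.e. $e = \top$. Since $e$ is simultaneously the top element and an atom, there is no element strictly between $\bot$ and $\top$, whence $A = \set{\bot,\top}$ and $\card{A} = 2$.

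I expect the only delicate points to be the passage from simplicity to ``the only congruence elements are $\bot$ and $\top$'' (so that Theorem \ref{thm:sip} is applicable and $d = \top$) and the idempotency manipulation turning the identity $c(g) = \cmp{e}$ into the fixed--point statement $c(\cmp{e}) = \cmp{e}$; once $\cmp{e} = \bot$ is established, the collapse to a two--element algebra is immediate from $e$ being an atom.
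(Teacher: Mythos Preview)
Your proof is correct and hinges on the same key identity $c(g) = \cmp{e}$ (the case $n=0$ of \eqref{ge1}), but the argument is organised differently from the paper's. The paper does not invoke Theorem~\ref{thm:sip} at all; instead it case--splits directly on whether $g = \bot$. If $g \neq \bot$, simplicity gives $c(g) = \top$, so $e = \cmp{c(g)} = \bot$, and normality collapses $\A$ to one element. If $g = \bot$, then $\cmp{e} = c(g) = \bot$, so $e = \top$, and for any $x \neq \bot$ one gets $x = x \circ e = x \circ \top = c(x) = \top$ straight from \eqref{simple}. Your route trades this case analysis for the prior result that $e$ is an atom, together with the observation (via idempotency of $c$) that $\cmp{e}$ is itself a congruence element; this is perfectly valid and makes the endgame (``$e = \top$ is an atom, hence $\card{A} = 2$'') immediate, at the cost of relying on Theorem~\ref{thm:sip}(i), whose proof is more involved than the direct argument here. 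The paper's version is thus more self--contained, while yours highlights the connection to the general structure of subdirectly irreducibles in $\V$.
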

\begin{proof}
Clearly, $\A$ is simple if it has at most two elements. Conversely, let $\A$ be simple. If $g \neq \bot$, then $c(g) = \top$ by \eqref{simple}, and thus, $\bot = \cmp{c(g)} = e$ by \eqref{ge1}.
The normality of $\circ$ implies that, for all $x \in A$, $x = e \circ x = \bot \circ x = \bot$, and therefore, $\A$ has only one element.

Now, suppose that $g = \bot$; then, $\bot = c(g) = \cmp{e}$ by \eqref{ge1}, and thus, $e = \top$. If $x \neq \bot$, then
\begin{gather*}
x = x \circ e = x \circ \top = c(x) = \top,
\end{gather*}
the latter by the simplicity of $\A$.
\end{proof}

Since every nontrivial variety contains a nontrivial simple algebra, it follows that the subvariety $\V_0$ of $\V$ generated by $\B_0$ is smallest nontrivial subvariety of $\V$.

If $\A$ is a CBM, we call $z \in A$ an \emph{annihilator of $\circ$}, if $x \circ z = z$ for all $x \in A, \ x \neq \bot$. The complex algebra of $\klam{\omega, 1, \cdot}$ has $\set{0}$ as a nonzero annihilator. This cannot happen in $\V$:

\begin{thm}\label{lem:anni}
Suppose that $\A \in \V$ and that $\card{A} > 2$. Then, $\A$ has no nonzero annihilator.
\end{thm}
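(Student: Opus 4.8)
The plan is to argue by contradiction. Assume $\A\in\V$ has $\card{A}>2$ yet carries a nonzero annihilator $z$, i.e.\ $x\circ z=z$ for every $x\neq\bot$, and deduce $z=\bot$. One is first tempted to reduce to the subdirectly irreducible factors of $\A$ (via Birkhoff's subdirect representation and Theorem~\ref{thm:sip}) and inspect them individually. This route is deceptive: a nonzero annihilator of $\A$ need only project down to a nonzero annihilator of \emph{some} subdirectly irreducible factor, and the two--element algebra $\B_0$ genuinely has such an annihilator (its unique nonzero element $\top$), so no contradiction arises from the factors alone. I would instead work directly from the identities \eqref{ge0}--\eqref{ge7}.

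The one observation that powers the whole argument is that every annihilator is a congruence element. Since $\card{A}>2$ we have $\bot\neq\top$, so $\top$ is a legitimate choice of $x$, and commutativity of $\circ$ gives $c(z)=z\circ\top=\top\circ z=z$. I would then split on whether $e=\top$.

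The substantive case is $e\neq\top$, where $\cmp{e}\neq\bot$. Then $\cmp{e}$ is itself a legal value of $x$, so the annihilator condition yields $z\circ\cmp{e}=\cmp{e}\circ z=z$, whence $\cmp{z\circ\cmp{e}}=\cmp{z}$. Substituting $x=z$ into \eqref{ge7} and using $c(z)=z$ then collapses everything:
\[
z=c(z)=c\bigl(z\land\cmp{z\circ\cmp{e}}\bigr)=c(z\land\cmp{z})=c(\bot)=\bot,
\]
the sought contradiction. I expect this to be the crux; the only genuine obstacle is that this computation is vacuous precisely when $\cmp{e}=\bot$, which is exactly why the case $e=\top$ must be treated separately.

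In that remaining case, $c(x)=x\circ e=x$ for all $x$, and \eqref{ge0} forces $x\circ y=e\land(x\circ y)=x\land y$, so $\circ$ is simply Boolean meet; the annihilator condition becomes $z\leq x$ for every nonzero $x$. This is where the hypothesis $\card{A}>2$ enters: if $z=\top$ then every nonzero element equals $\top$ and $\card{A}=2$, while if $z\neq\top$ then $\cmp{z}\neq\bot$ forces $z\leq\cmp{z}$, i.e.\ $z=\bot$; either alternative contradicts $z\neq\bot$ once $\card{A}>2$. The remaining points are routine: that $\bot\neq\top$, that $c(\bot)=\bot$ by normality together with commutativity, and that the substitution $x=\cmp{e}$ is admissible exactly under $e\neq\top$.
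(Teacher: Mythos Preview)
Your argument is correct and considerably more direct than the paper's. Both the key computations check out: in the main case $e\neq\top$ you legitimately invoke the annihilator property at $x=\cmp{e}$ to get $z\circ\cmp{e}=z$, and then \eqref{ge7} with $x=z$ collapses $c(z)$ to $c(\bot)=\bot$; combined with $c(z)=z$ (from the annihilator property at $x=\top$) this gives $z=\bot$. In the case $e=\top$, \eqref{ge0} indeed forces $\circ=\land$, and your Boolean argument correctly uses $\card{A}>2$ to rule out a nonzero element below every nonzero element.

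The paper takes a completely different route. It invokes the representation $\V=\hom\sub\prods\{\B_n:n\in\omega\}$, writes $\A$ as a quotient $\pi:\D\onto\A$ of a subalgebra $\D$ of a product $\prod_\alpha\C_\alpha$ with each $\C_\alpha$ some $\B_n$, chooses a congruence element $f\in\D$ with $\pi(f)=z$, and then argues coordinatewise: using $z\circ g_\A=z$ (which requires $g_\A\neq\bot$, hence $\card{A}>2$) and the explicit structure of each $\B_n$, it shows $f(\alpha)\leq i(\alpha)$ for a suitable element $i$ of the kernel, forcing $f$ into the kernel and hence $z=\bot$. This approach exercises the concrete description of the finite generators $\B_n$ and the congruence structure developed earlier, but is substantially longer. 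Your proof, by contrast, extracts exactly the equational content needed---\eqref{ge0} and \eqref{ge7}---and applies it directly in $\A$, bypassing the subdirect machinery entirely. The only cost is that \eqref{ge7} must already be known to hold in $\V$; since the paper has established this just above, your approach is strictly more economical here.
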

\begin{proof}
Since $\V =\hom\sub\prods\set{\B_n: n \in \omega}$, there are a sequence $\set{\C_\alpha: \alpha < \kappa}$ of algebras from $\set{\B_n: n \in \omega}$, a subalgebra $\D$ of $\C \df \prod_{\alpha < \kappa} \C_\alpha$, and an onto homomorphism $\pi:\D \onto \A$ with kernel $I$. Let $g = \cmp{\cmp{e} \circ \cmp{e}} \land \cmp{e}$ in $\C$, and $g_\alpha = \cmp{\cmp{e} \circ \cmp{e}} \land \cmp{e}$ in $\C_\alpha$. Since $\D$ is a subalgebra of $\C$ and $g$ is a constant term, we have $g \in \D$; furthermore, $g(\alpha) = g_\alpha$ for all $\alpha < \kappa$.

Assume that $z$ is a nonzero annihilator of $\A$, and let $f \in \D$ with $z = \pi(f)$; since $z \neq \bot$ we have $f \not\in I$, in particular, $f \neq \bot$. Now, $z = z \circ \top = \pi(f) \circ \pi(\top) = \pi(f \circ \top)$, and we may suppose that $f$ is a congruence element. Since $\A$ has more than two elements, $\bot < g_\A$, and therefore $z \circ g_\A = z$. Hence, there is some $i \in I$ such that $(f \circ g) \lor i = f \lor i$, in particular, $f \leq (f \circ g) \lor i$; since $I$ is a congruence ideal, we may suppose \wlg that $i = c(i)$.

Let $\alpha < \kappa$ such that $f(\alpha) \neq \bot$, and suppose that $\C_\alpha = \B_n$; then, $f(\alpha) \leq (f(\alpha) \circ g_\alpha) \lor i(\alpha)$. Since $f$ is a congruence element, so is $f(\alpha)$, and it follows from the definition of $\B_n$ that there is some $m < n$ such that $f(\alpha) = c(g_\alpha^m)$. Now,
\begin{align*}
f(\alpha) &\leq (f(\alpha) \circ g(\alpha)) \lor i(\alpha) \\
&= (c(g_\alpha^m) \circ g_\alpha) \lor i(\alpha) \\
&=(g_\alpha^m \circ \top \circ g_\alpha) \lor i(\alpha) \\
&= c(g_\alpha^{m+1}) \lor i(\alpha) \\
&\overset{\eqref{ge1}}{=} \cmp{g_\alpha^0 \lor \ldots \lor g_\alpha^{m}} \lor i(\alpha) \\
&= (\cmp{g_\alpha^0 \lor \ldots \lor g_\alpha^{m-1}} \land \cmp{g_\alpha^m}) \lor i(\alpha) \\
&= (f(\alpha) \land \cmp{g_\alpha^m}) \lor i(\alpha), \\
\intertext{which implies}
f(\alpha) \land g_\alpha^m & \leq i(\alpha).
\end{align*}
Now, $f(\alpha) = \cmp{g_\alpha^0 \lor \ldots \lor g_\alpha^{m-1}}$ implies $g_\alpha^m \leq f(\alpha)$, and thus,
$g_\alpha^m \leq i(\alpha)$. Since $i(\alpha)$ is a congruence element, we have $i = i \circ \top$, and therefore,
\begin{gather*}
f(\alpha) = c(g_\alpha^m) = g_\alpha^m \circ \top \leq i(\alpha) \circ \top = i(\alpha).
\end{gather*}
Thus, $f(\alpha) \leq i(\alpha)$ for all $\alpha < \kappa$ and it follows that $f \in I$, contradicting our hypothesis.
\end{proof}

Let us briefly look at the complex algebra $\CmN^{+,\leq}$ of $\klam{\omega, +, \leq, 0}$. We have seen earlier that the complex version of
$\leq$ is the operator $\da : 2^{\omega} \to 2^{\omega}$ defined by $\da~a = \set{n \in \om: (\exists m)[m \in a \tand n \leq m]}$; thus, the universe of $\CmNN^{+,\leq}$ is $FC(\omega)$.

Since $\leq$ is first order definable in $\klam{\omega,+,0}$, one might suspect that $\CmN^{+,\leq}$ and $\CmN^{+}$ are not ``too far apart''. It turns out, however that $\CmN^{+,\leq}$ has much stronger properties than $\CmN^{+}$. 
\begin{thm}\label{thm:leq}
\begin{enumerate}[label=\textup{(}\emph{\roman*}\textup{)}]
\item $\Cm \N^{+,\leq}$ is a discriminator algebra.
\item $\Eq \Cm \N^{+,\leq} \neq \Eq \CmNN^{+,\leq}$.
\end{enumerate}
\end{thm}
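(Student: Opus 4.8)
The plan is to treat the two parts separately. Part~(i) is a short construction in the spirit of Theorem~\ref{thm:disc0}(\emph{i}), whereas part~(ii) turns on isolating a single identity that separates the two algebras; I expect (i) to be routine and the whole difficulty of the theorem to sit in the choice of witnessing equation for (ii).

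For (i) I would simply exhibit a term realising the discriminator $d$ of \eqref{discr2}. The two facts I would use are that $\da a = \om$ exactly when $a$ is unbounded, and $\da\z = \z$. Since $a \plus \om$ is empty when $a = \z$ and equals the tail $\set{k : \min(a) \leq k}$ --- an unbounded set --- when $a \neq \z$, the term
\[
d(x) \df \da(x \plus \om)
\]
evaluates to $\z$ if $x = \z$ and to $\om$ otherwise, which is precisely \eqref{discr2}. (Equivalently one may take $d(x) = \da x \cup (x \plus \om)$.) Hence $\CmN^{+,\leq}$ carries a discriminator term and is a discriminator algebra.

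For (ii) I would first note that, since $\CmNN^{+,\leq}$ is a subalgebra of $\CmN^{+,\leq}$, every identity valid in the latter holds in the former, so $\Eq\CmN^{+,\leq} \subseteq \Eq\CmNN^{+,\leq}$; it therefore suffices to produce an identity that holds throughout the universe $FC(\om)$ of $\CmNN^{+,\leq}$ but fails on some element of $2^\om$. The guiding idea is that the only structural feature distinguishing $2^\om$ from $FC(\om)$ is the presence of infinite--coinfinite sets, so I want an identity sensitive exactly to boundedness. The term $\da\cmp{\da x}$ serves as a two--valued boundedness indicator: unwinding the definition of $\da$ shows that it equals $\om$ when $x$ is bounded (in particular when $x$ is finite) and $\z$ when $x$ is unbounded.

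I would then take the candidate identity
\[
\da\cmp{\da x} \cup \da\cmp{\da\cmp{x}} = \om.
\]
In $\CmNN^{+,\leq}$ every $x$ is finite or cofinite, so at least one of $x,\cmp{x}$ is finite, forcing the corresponding disjunct to equal $\om$; thus the identity holds. In $\CmN^{+,\leq}$, however, taking $x$ to be the set of even numbers makes both $x$ and $\cmp{x}$ unbounded, so both disjuncts vanish and the left--hand side is $\z \neq \om$. Hence this identity lies in $\Eq\CmNN^{+,\leq}\setminus\Eq\CmN^{+,\leq}$, which gives (ii). The one genuinely creative step is the choice of identity: the difficulty is not establishing a hard inequality but recognising that the first--order notion ``$x$ is finite'' can be rendered equationally by the iterated term $\da\cmp{\da x}$, after which the finite/cofinite dichotomy of $FC(\om)$ collapses into a single valid equation that the generic set of evens refutes.
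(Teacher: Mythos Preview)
Your proof is correct and follows essentially the same route as the paper. For (i) the paper uses $d(x)=\om\plus\da x$ (noting $0\in\da x$ whenever $x\neq\z$), while you use $\da(x\plus\om)$; both are valid discriminator terms. For (ii) the paper builds $\fin(a)=\cmp{d(\cmp{\da a})}$, which equals $\om$ iff $a$ is infinite, and takes the separating identity $\fin(x)\cap\fin(\cmp{x})=\z$; your term $\da\cmp{\da x}$ is precisely the complement of $\fin(x)$, and your identity is the De~Morgan dual of the paper's---so the arguments coincide, with the small bonus that your version of (ii) avoids $\plus$ entirely.
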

\begin{proof}
(i) \ Set $d(x) \df \omega \plus \da x$. If $x = \z$, then $\da x = \z$, and thus, $d(x) = \z$. Otherwise, $0 \in \da x$, hence, $d(x) = \omega \plus x = \omega$.

(ii) \ Consider the function $\fin: 2^\om \to \set{\z,\om}$ defined by $\fin(a) \df \cmp{d(\cmp{\da a})}$. Then,
\begin{gather*}
\fin(a) =
\begin{cases}
\om, &\text{if } \card{a} = \om, \\
\z, & \text{if } a \text{ is finite.}
\end{cases}
\end{gather*}
Since for each $a \in \CmNN^{+,\leq}$, either $a$ finite or $\cmp{a}$ is finite, the equation $\fin(a) \cap \fin(\cmp{a}) = \z$ holds in $\CmNN^{+,\leq}$, but not in $\CmN^{+,\leq}$.
\end{proof}

\subsection{Complex algebras of $\klam{\omega, \cdot, 1}$}

Let $\N^{\mal} = \klam{\omega,\cdot,1}$, $\CmN^{\mal}$ be its complex algebra, and $\V$ be the variety generated by $\CmN^{\mal}$. Furthermore, let $c(a) \df \omega \mal a$ for every $a \subseteq \omega$.

We will first describe the smallest subalgebra of $\CmN^{\mal}$.


\begin{thm}\label{thm:cm0}
$\CmNN^{\plus} \cong \CmNN^{\mal}$.
\end{thm}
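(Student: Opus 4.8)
The plan is to show that $\CmNN^{\plus}$ and $\CmNN^{\mal}$ are, abstractly, one and the same algebra: the finite--cofinite Boolean algebra on countably many atoms indexed by $\N$, equipped with the operation sending the atoms indexed by $k$ and $l$ to the atom indexed by $k+l$, and having the atom indexed by $0$ as monoid identity. For $\CmNN^{\plus}$ this is immediate from Lemma~\ref{lem:fc}: its universe is $FC(\omega)$, its atoms are the singletons $\set{k}$, we have $\set{k} \plus \set{l} = \set{k+l}$, and $\set{0}$ is the identity; so, identifying $\set{k}$ with $k \in \N$, $\CmNN^{\plus}$ is exactly the finite--cofinite algebra on $\N$ under complex addition of index sets. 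The whole task is to exhibit $\CmNN^{\mal}$ in the same form, and the entire difficulty will turn out to be hidden in the behaviour of the absorbing element $0$.

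For the multiplicative side, let $\ell(n)$ denote the number of prime factors of $n \geq 1$ counted with multiplicity, so that $\ell(1) = 0$, $\ell(p) = 1$ for primes $p$, and $\ell(mn) = \ell(m) + \ell(n)$; put $S_k \df \set{n \geq 1 : \ell(n) = k}$, so $S_0 = \set{1}$ and $S_1$ is the set of primes. I would first establish a saturation lemma: every element of $\CmNN^{\mal}$ is a union of some of the sets $S_k$ together with, possibly, the point $0$. This holds by induction over terms, since the generators $\set{1}, \z, \omega$ have this shape, the Boolean operations visibly preserve it, and $\mal$ preserves it because $\ell(mn) = \ell(m) + \ell(n)$ forces $\bigl(\bigcup_{k \in A} S_k\bigr) \mal \bigl(\bigcup_{l \in B} S_l\bigr)$ to agree with $\bigcup_{k \in A,\, l \in B} S_{k+l}$ off $0$. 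In particular $S_k \mal S_l = S_{k+l}$; moreover each $S_k$ lies in $\CmNN^{\mal}$, since $S_0 = \set{1}$, the primes are definable by $S_1 = \cmp{\cmp{\set{1}} \mal \cmp{\set{1}}} \cap \cmp{\set{1}}$, and $S_k = \underbrace{S_1 \mal \cdots \mal S_1}_{k}$.

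It remains to decide exactly when $0$ belongs to an element, and this is the step I expect to be the main obstacle. Tracking $0$ through the operations --- using that $0 \in X \mal Y$ iff $0 \in X$ and $Y \neq \z$, or $0 \in Y$ and $X \neq \z$ --- the same induction should yield
\begin{gather*}
\CmNN^{\mal} = \Bigl\{\, \textstyle\bigcup_{k \in A} S_k : A \subseteq \N \text{ finite} \,\Bigr\} \ \cup\ \Bigl\{\, \set{0} \cup \textstyle\bigcup_{k \in A} S_k : A \subseteq \N \text{ cofinite} \,\Bigr\},
\end{gather*}
so that the $S_k$ are exactly the atoms of $\CmNN^{\mal}$, the point $0$ is attached to precisely the cofinite elements, and $\set{0} \notin \CmNN^{\mal}$. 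Thus $\CmNN^{\mal}$ is again the finite--cofinite algebra on the index set $\N$ (via $S_k \leftrightarrow k$), now with $0$ serving merely as the marker of cofinite elements, with $S_k \mal S_l = S_{k+l}$ and identity $S_0 = \set{1}$. Finally I would let $\Phi \colon \CmNN^{\plus} \to \CmNN^{\mal}$ be induced by $\set{k} \mapsto S_k$ and check it is an isomorphism: it is a Boolean isomorphism because both algebras are the finite--cofinite algebra on $\N$, it sends the identity $\set{0}$ to $S_0 = \set{1}$, and it respects the operation since $\Phi(\set{k} \plus \set{l}) = \Phi(\set{k+l}) = S_{k+l} = S_k \mal S_l = \Phi(\set{k}) \mal \Phi(\set{l})$, this agreement extending from atoms to all finite and cofinite elements precisely because the $0$--bookkeeping above makes complex addition of index sets match on the two sides.
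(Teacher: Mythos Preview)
Your plan is correct and shares the paper's key idea: the atoms of $\CmNN^{\mal}$ are the sets $S_k$ (the paper's $a_k$) of positive integers having exactly $k$ prime factors, and the isomorphism is induced by $\set{k} \mapsto S_k$. The organisation differs: the paper reaches the atom description by building an ascending chain $A_0 \subseteq A_1 \subseteq \cdots$ of finite Boolean subalgebras (with $A_{n+1}$ the Boolean closure of $\set{a \mal b : a,b \in A_n}$) and computing the atoms of each $A_n$ explicitly, while you obtain the full description of $\CmNN^{\mal}$ in a single induction over terms. Your explicit tracking of the point $0$ --- finite index set iff $0$ absent, cofinite iff $0$ present --- pays off at the end: it lets you verify directly that $\Phi$ respects $\mal$ on cofinite arguments, whereas the paper's closing one-line appeal to complete additivity is a little brisk (as the paper itself remarks immediately after the proof, infinite joins in $\CmNN^{\mal}$ and $\CmN^{\mal}$ need not agree).
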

\begin{proof} For each $n \in \omega$, let
\begin{gather*}
a_n \df \set{m \in \omega: m \text{ has exactly $n$ (possibly repeated) prime divisors}}.
\end{gather*}
Then, $a_0 = \set{1}$, and the set of primes is circuit definable by
\begin{gather*}
a_1 = \cmp{(\omega \setminus \set{1}) \mal (\omega \setminus \set{1})} \setminus \set{1}
\end{gather*}
It comes as no surprise that $a_1$ is nothing else than the constant $g$ defined in \eqref{def:g}.
Each $a_n$ is circuit definable, since $a_n = \underbrace{a_1 \mal \ldots \mal a_1}_{n\text{--times}}$. Clearly, $a_i \cap a_j = \z$ for $i \neq j$, and $\bigcup_{n \in \om} a_n= \omega \setminus \set{0}$; the latter can be shown via induction on the degree of a term.

Let $A_0$ be the Boolean algebra with atoms $\set{\set{1}, \omega \setminus \set{1}}$, and for $n+1$ let $A_{n+1}$ be the Boolean closure of $\set{a \mal b: a,b \in A_n}$.Furthermore, for each $n \in \omega$, let
\begin{gather*}
b_{n+1} = \cmp{a_0 \cup \ldots \cup a_n}.
\end{gather*}
\begin{claim*}
For $0 < n$ each $A_n$ is finite with atoms $a_0, \ldots, a_{2^{n-1}}, b_{2^{n-1}+1}$.
\end{claim*}
First, we consider $n =1$. Computing $\set{a \mal b: a,b \in A_0}$), we retain $A_0$ (since $\set{1} \in A_0$) and, obtain additionally, $(\om \setminus \set{1}) \mal (\om \setminus \set{1})$ which is the set of all positive composite numbers. Thus, the atom $\omega \setminus \set{1}$ of $A_0$ splits into $a_1$, the set of all prime numbers, and $b_2$, the set of all composite numbers (including $0$). Since $1 = 2^{1-1}$, the claim is true for $n=1$.

Suppose that the claim is true for $A_n$, i.e. that the atoms of $A_n$ are $a_0, a_1, \ldots, a_{2^{n-1}}, b_{2^{n-1}+1}$. We need to show that the closure of $\set{a \mal b: a,b \in A_n}$ under the Boolean operations gives us $A_{n+1}$. Since $\mal$ distributes over $\cup$ it is sufficient to find $a_i \mal a_j$ and $a_i \mal b_{n+1}$ for $i,j \leq n$. Now, if $i,j \leq n$, then $a_i \mal a_j = a_{i + j}$, and thus, from $a_i \mal a_j$ we obtain the disjoint sets
\begin{gather*}
a_0, \ a_1, \ldots, a_{2^{n-1}}, \ a_{2^{n-1}+1}, \ldots, a_{2^{n-1} + 2^{n-1}} = a_{2^{(n+1)-1}}.
\end{gather*}
From $a_i \mal b_{2^{n-1}+1}$ we obtain
\begin{gather*}
b_{2^{n-1}+1} \supseteq b_{2^{n-1}+2} \supseteq b_{2^{n-1}+1 + 2^{n-1}} = b_{2^{(n+1)-1}+1}
\end{gather*}
The claim now follows from $b_{m} \setminus b_{m+1} = a_m$.

Clearly, $\set{a_n: n \in \omega}$ is the set of atoms of $\CmNN^{\mal}$. Let $f:\CmNN^{\plus} \to \CmNN^{\mal}$ be the mapping induced by $f(\set{n}) = a_n$. Then, $f$ is bijective, and
\begin{gather*}
f(\set{n} \plus \set{m}) = f(\set{n+m}) = a_{n+m} = a_n \mal a_m = f(\set{n}) \mal f(\set{m}).
\end{gather*}
Since $\plus$ and $\mal$ are (completely) additive, $f$ is an isomorphism.
\end{proof}
It may be noted that that $0 \in \cmp{a_n}$ for all $n \in \omega$. Thus, $\set{0}$ is not definable from the constants, and
\begin{gather*}
\omega = \sum\nolimits^{\CmNN^{\mal}} \set{a_n: n \in \omega} \neq \sum\nolimits^{\CmN^{\mal}} \set{a_n: n \in \omega} = \omega \setminus\set{0}.
\end{gather*}
It follows that $\CmNN^{\mal}$ as a Boolean algebra is not a regular Boolean subalgebra of $\CmN^{\mal}$ \cite[for the definition see][]{kop89}.

Let us now consider the algebra $\CmEN^\mal$, i.e. the subalgebra of $\CmN^\mal$ generated by its atoms $\set{n}$. We note that $\set{0}$ is a nonzero annihilator, and thus is a proper congruence element - indeed, the smallest nonzero congruence element. Therefore, $\CmEN^\mal$ is subdirectly irreducible. By Theorem \ref{lem:anni}, no element of $\Var\CmN^{\plus}$ with more than two elements has a nonzero annihilator. Together with $\CmNN^{\plus} \cong \CmNN^{\mal}$ we obtain that $\CmEN^\mal \not\in \Var \CmNN^\mal$, and therefore, $\Var\CmEN^\mal \neq \Var \CmNN^\mal$

Let $\theta$ be the congruence generated by $\set{0}$, and $\A$ be the complex
algebra of $\klam{\omega \setminus \set{0}, \cdot,1}$. Then, clearly, $a \theta b \Iff a \cup
\set{0} = b \cup \set{0}$, and $\CmEN^\mal/\theta$ is isomorphic to the
singleton algebra $\A_1$ of $\A$; furthermore, $\CmNN^\mal \cong \A_0$.

Owing to the presence of the nonzero annihilator $\set{0}$ we can still turn
satisfiability (validity) of inequations into satisfiability (validity) of
equations even though $\CmEN^\mal$ is not a discriminator algebra - it is subdirectly irreducible, but not simple:

\begin{align}
(\exists \vec{x})[\tau(\vec{x}) \neq \sigma(\vec{x})] &\Iff (\exists \vec{x})[\tau(\vec{x}) \symdiff \sigma(\vec{x}) \neq \bot]
\Iff (\exists \vec{x})[\set{0} \mal(\tau(\vec{x}) \symdiff \sigma(\vec{x})) = \set{0}], \label{discmal} \\
(\forall \vec{x})[\tau(\vec{x}) \neq \sigma(\vec{x})] &\Iff (\forall \vec{x})[\tau(\vec{x}) \symdiff \sigma(\vec{x}) \neq \bot] \Iff (\forall \vec{x})[\set{0}\mal (\tau(\vec{x}) \symdiff \sigma(\vec{x})) = \set{0}]. \label{discmal1}
\end{align}

We know already that the set of primes is definable in $\CmEN^\mal$. This can be generalized as follows: For $n \in \om$ let $\Po(n)$ be the set of all powers of $n$.
\begin{thm}\label{lem:plus} Let $p_0,\ldots, p_n$ be primes, and $b = \Po(p_0) \mal \ldots \mal
\Po(p_n)$. Then, for all $a \subseteq \omega$,
\begin{gather*}
 a \cap b \in \CmEN^\mal \Iff a \cap b \in FC(b).
\end{gather*}
Here, $FC(b)$ is the set of all finite or cofinite subsets of $b$.
\end{thm}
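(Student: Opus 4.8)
The plan is to transport the problem to the multiplicative monoid carried by $b$ and then to read off the answer from the finite--cofinite description of the additive case. The decisive structural fact is that $b$ is \emph{factor--closed}: its elements are exactly the nonzero naturals all of whose prime divisors lie among $p_0,\dots,p_n$, so whenever a product $xy$ lands in $b$ neither factor can introduce a new prime, whence $x,y\in b$. I would first use this to show that the retraction $h\colon 2^\om\to 2^{\,b}$ given by $h(X)=X\cap b$ is a homomorphism from $\CmN^\mal$ onto a subalgebra of the full complex algebra $\Cm\klam{b,\cdot,1}$: it visibly commutes with the Boolean operations and fixes $\set1$, while factor--closure yields $(X\mal Y)\cap b=(X\cap b)\mal(Y\cap b)$, so it commutes with $\mal$ as well. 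Since the singletons generate $\CmEN^\mal$ and $h(\set m)=\set m$ for $m\in b$ and $h(\set m)=\z$ otherwise, $h$ maps $\CmEN^\mal$ onto the subalgebra of $\Cm\klam{b,\cdot,1}$ generated by its atoms. By unique factorisation $\klam{b,\cdot,1}$ is free commutative on $p_0,\dots,p_n$, so indexing each element by its exponent vector identifies this monoid with $\klam{\N,+,0}=\N^\plus$ when there is a single prime (and with $\klam{\N^{\,n+1},+,\vec0}$ in general), and the atom--generated subalgebra corresponds to $\CmEN^\plus$.

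For the implication $a\cap b\in FC(b)\Implies a\cap b\in\CmEN^\mal$ I would show directly that $b\in\CmEN^\mal$. Each $\Po(p_i)$ is circuit definable: a positive integer is a power of $p_i$ precisely when no prime other than $p_i$ divides it, so, using that the set of primes is definable, the positive integers with a prime divisor $\neq p_i$ form the definable set $(\text{primes}\setminus\set{p_i})\mal\cmp{\set0}$, and removing it from $\cmp{\set0}$ leaves $\Po(p_i)$. Hence $b=\Po(p_0)\mal\dots\mal\Po(p_n)\in\CmEN^\mal$, and then every finite subset of $b$ (a finite union of singletons) and every subset $b\cap\cmp F$ with $F$ finite is definable, i.e.\ $FC(b)\subseteq\CmEN^\mal$.

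The converse is the substantive half and where I expect the work to lie. Given $a\cap b\in\CmEN^\mal$, applying $h$ and using $b\in\CmEN^\mal$ shows $a\cap b=h(a\cap b)$ belongs to the atom--generated subalgebra of $\Cm\klam{b,\cdot,1}$, so everything reduces to proving that this subalgebra is no larger than $FC(b)$. For a single prime this is exactly the finite--cofinite description of the singleton--generated subalgebra $\CmEN^\plus=\CmNN^\plus$ supplied by Lemma~\ref{lem:fc}, transported across the isomorphism with $\N^\plus$. The main obstacle is to secure the same collapse in the presence of several primes: one must rule out that the complex product, applied to singleton--generated sets, produces a subset of $b$ that is at once infinite and co--infinite in $b$, i.e.\ that the singleton--generated subalgebra of the complex algebra of $\N^{\,n+1}$ still reduces to the finite--cofinite algebra. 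I would attack this by induction on the defining term, maintaining along each prime coordinate a finite--cofinite invariant and checking that $\mal$ preserves it; this coordinatewise control of $\mal$ is the delicate point on which the argument stands or falls.
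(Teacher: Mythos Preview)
Your reduction via the retraction $h(X)=X\cap b$ is correct and elegant: factor--closure of $b$ (which excludes $0$ and whose elements have only the $p_i$ as prime divisors) does give $(X\mal Y)\cap b=(X\cap b)\mal(Y\cap b)$, so $h$ is a homomorphism and carries $\CmEN^\mal$ onto the singleton--generated subalgebra of $\Cm\klam{b,\cdot,1}$. The paper does not pass through this homomorphism; it argues directly by induction on a minimal--length term $\tau$ in $\CmEN^\mal$ with $\tau\cap b\notin FC(b)$, disposing of $\cup$ and complementation by minimality, and for $\tau=s\mal t$ arguing that when $s\cap b$ is cofinite in $b$ then $\set{q}\mal s$ is again cofinite in $b$ for each $q\in t\cap b$, whence $(s\mal t)\cap b$ is cofinite. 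For a single prime both routes succeed, yours landing cleanly on Lemma~\ref{lem:fc}.

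The gap is exactly the point you flag as delicate, and for $n\ge1$ it is not merely delicate but false. The singleton--generated subalgebra of the complex algebra of $\klam{\omega^{\,n+1},+,\vec{0}}$ is strictly larger than the finite--cofinite algebra: the element $\set{(1,0,\dots,0)}\plus\omega^{\,n+1}$ is singleton--generated, yet its complement $\set{0}\times\omega^{\,n}$ is infinite, so it is neither finite nor cofinite. Transported back through your isomorphism this is $\set{p_0}\mal b$, which lies in $\CmEN^\mal$ (since $\set{p_0}$ and $b$ do) while $b\setminus(\set{p_0}\mal b)=\Po(p_1)\mal\dots\mal\Po(p_n)$ is infinite. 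No coordinatewise finite--cofinite invariant can therefore survive $\mal$, and the induction you propose cannot close. The paper's direct argument meets the same obstruction at the corresponding step: for $n\ge1$ and $q\neq1$, multiplication by $q$ sends $b$ into the proper sub--cone $q\cdot b\subsetneq b$, so $\set{q}\mal s$ need not be cofinite in $b$ even when $s\cap b$ is. Both approaches go through only in the single--prime case $b=\Po(p)$.
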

\begin{proof}
\al We first show that $\Po(p) \in \CmEN^\mal$ for every prime $p$.
Consider the following sequence:
\begin{xalignat*}{2}
&\om \mal \set{p} && \text{All multiples of $p$} \\
& \cmp{\om \mal \set{p}} && \text{All $n$ not divisible by $p$} \\
& \cmp{\om \mal \set{p}} \cap \cmp{\set{1}} && \text{All $n \neq 1$ not divisible by $p$, i.e. coprime to $p$, since $p$ is prime} \\
& \om \mal (\cmp{\om \mal \set{p}} \cap \cmp{\set{1}}) && \text{All $n$ with a factor $ \neq 1$ coprime to $p$} \\
& \cmp{\om \mal (\cmp{\om \mal \set{p}} \cap \cmp{\set{1}})} && \text{All $n$ with ($n \neq 1 \Implies$ no $m$ coprime to $p$ divides $n$),}
\end{xalignat*}
which defines the set of all powers of $p$. It follows that $b \in \CmEN^\mal$.
Since all singletons are in $\CmEN^\mal$, each finite or cofinite subset of $b$
is in $\CmEN^\mal$.

\ar Consider the condition
\begin{gather}\label{notFC}
 x \cap b \not \in FC(b).
\end{gather}

Suppose there are a term of minimal length $\tau(x_0, \ldots, x_k)$ and $a_0, \ldots, a_k \subseteq \omega$
such that $a \df \tau(a_0, \ldots, a_k)$ satisfies \eqref{notFC}. If $a = s \cup t$, then $s$
or $t$ satisfy \eqref{notFC}, contradicting the minimality of $\tau$; similarly, $a$ is
not of the form $\cmp{s}$. Finally, let $a = s \mal t$. By the minimality of $\tau$,
both $s \cap b$ and $t \cap b$ are in $FC(b)$, and by our assumption one must be cofinite
in $b$, say, $s$. The cofinality implies there are $q_0, \ldots, q_n \in \om$ such that
such that for all $m_0, \ldots, m_n \in \omega$,
\begin{gather}\label{malFC2}
 m_0 \geq q_0 \land \ldots \land m_n \geq q_n \Implies p^{m_0} \cdot \ldots \cdot p^{m_n} \in s.
\end{gather}
Let $q = p_0^{j_0} \cdot \ldots p_n^{j_n} \in t \cap b$. Then, $\set{q} \mal s$ is cofinite
in $s$ by \eqref{malFC2}, and thus, cofinite in $b$. It follows that $a = s \mal t$ is cofinite in $b$,
contradicting our assumption.
\end{proof}

Theorem \ref{lem:plus} does not hold in $\CmNN$: If $a \df (\set{3} \mal \om) \plus \set{1}$, then $\Po( 2) \cap a$ is the set of all powers of $4$. This also shows that $\CmEN^\mal \subsetneq \CmEN$.

\begin{thm}
\begin{enumerate}[label=\textup{(}\emph{\roman*}\textup{)}]
\item For each $n > 0$, $\CmEN^\mal$ contains an idempotent subsemigroup with $n$ generators and $2^n -1$ elements.
\item Suppose that $G$ is a subsemigroup of $\CmEN^\mal$ and a group. Then, $\card{G} = 1$.
\end{enumerate}
\end{thm}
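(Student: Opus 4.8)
For part (i), the plan is to use sets of smooth numbers supported on a fixed finite set of primes. Fix distinct primes $p_1, \ldots, p_n$ and, for each nonempty $S \subseteq \set{1, \ldots, n}$, put $b_S \df \Po(p_{i_1}) \mal \cdots \mal \Po(p_{i_k})$, where $S = \set{i_1, \ldots, i_k}$; concretely, $b_S$ is the set of positive integers all of whose prime divisors lie in $\set{p_i : i \in S}$. The proof of Theorem~\ref{lem:plus} shows $\Po(p) \in \CmEN^\mal$ for every prime $p$, and since $\CmEN^\mal$ is closed under $\mal$ we get $b_S \in \CmEN^\mal$. I would then verify three things by direct computation on prime factorisations: each $b_S$ is idempotent (being closed under multiplication and containing $1$, it is a submonoid of $\klam{\omega,\cdot,1}$, so $b_S \mal b_S = b_S$); the product law $b_S \mal b_T = b_{S \cup T}$ holds (split the exponent vector of a member of $b_{S \cup T}$ across $S$ and $T$); and the $b_S$ are pairwise distinct (if $i \in S \setminus T$, then $p_i \in b_S \setminus b_T$). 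Consequently the subsemigroup generated by the $n$ elements $\Po(p_1), \ldots, \Po(p_n)$ is exactly $\set{b_S : \z \neq S \subseteq \set{1, \ldots, n}}$, an idempotent (indeed, semilattice) semigroup with $n$ generators and $2^n - 1$ elements.

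For part (ii), the key point is that the only subsemigroup of $\klam{\omega \setminus \set{0}, \cdot, 1}$ that is a group is $\set{1}$, and I would transfer this to the complex level. Let $\varepsilon$ denote the identity of $G$. First I dispose of the degenerate cases: if $\z \in G$ then, $\z$ being an annihilator, its inverse forces $\varepsilon = \z$ and hence $G = \set{\z}$; and if $\set{0} \in G$ then, $\set{0}$ being a nonzero annihilator among the (now nonempty) members of $G$, its inverse forces $\varepsilon = \set{0}$ and hence $G = \set{\set{0}}$. So I may assume every member of $G$ is nonempty and different from $\set{0}$. Next I show that either $0$ lies in every member of $G$ or in none: from $a = \varepsilon \mal a$ we get $0 \in \varepsilon \Implies 0 \in a$ for all $a$, while from $a \mal a^{-1} = \varepsilon$ we get $0 \notin \varepsilon \Implies 0 \notin a$ for all $a$ (using $a, a^{-1} \neq \z$). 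In either case the map $a \mapsto a^+ \df a \cap (\omega \setminus \set{0})$ is an injective homomorphism of $G$ into the power semigroup of $\klam{\omega \setminus \set{0}, \cdot}$, since $xy = 0$ \tiff $x = 0$ \tor $y = 0$ gives $(a \mal b)^+ = a^+ \mal b^+$; its image $G^+$ is again a group, now consisting of nonempty subsets of $\omega \setminus \set{0}$.

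It therefore suffices to prove $\card{G^+} = 1$. Here $\min$ is a homomorphism from $G^+$ into $\klam{\omega \setminus \set{0}, \cdot}$, since $\min(x \mal y) = \min(x) \cdot \min(y)$ for nonempty sets of positive integers. Its image is a subgroup of $\klam{\omega \setminus \set{0}, \cdot}$, hence trivial (cancellativity forces the identity to be $1$, and $m m' = 1$ in the positive integers forces $m = 1$), so $\min(x) = 1$, i.e.\ $1 \in x$, for every $x \in G^+$. A sandwich step now closes the argument: writing $\varepsilon^+$ for the identity of $G^+$, from $1 \in x$ we get $\varepsilon^+ = \set{1} \mal \varepsilon^+ \subseteq x \mal \varepsilon^+ = x$, and from $1 \in x^{-1}$ we get $x = x \mal \set{1} \subseteq x \mal x^{-1} = \varepsilon^+$, whence $x = \varepsilon^+$. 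Thus $G^+ = \set{\varepsilon^+}$ and so $\card{G} = 1$.

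I expect the only real subtlety to be the careful bookkeeping around $0$ in part (ii) --- isolating the bottom $\z$ and the annihilator $\set{0}$, establishing the dichotomy that $0$ belongs to all or to none of the members of $G$, and checking that passage to positive parts is a well-defined injective homomorphism. Once the problem is reduced to nonempty sets of positive integers, the $\min$-homomorphism together with the triviality of groups inside $\klam{\omega \setminus \set{0}, \cdot}$ makes the conclusion immediate. It is worth noting that this argument uses nothing special about $\CmEN^\mal$ beyond the ambient operation $\mal$, so (ii) in fact holds in the full complex algebra $\CmN^\mal$ as well.
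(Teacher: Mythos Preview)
Your proof is correct and follows essentially the same approach as the paper. For (i) you use exactly the paper's construction $a_M = \Po(p_{i_1}) \mal \cdots \mal \Po(p_{i_k})$ with a bit more detail; for (ii) both arguments dispose of the degenerate cases $\varepsilon \in \set{\z,\set{0}}$, establish that $1$ belongs to every group element via a minimum argument, and finish with the same sandwich $\varepsilon \subseteq a \subseteq \varepsilon$ --- the only cosmetic difference is that you factor through the homomorphisms $a \mapsto a \setminus \set{0}$ and $\min$, whereas the paper argues directly on $e \setminus \set{0}$ using $e \mal e = e$.
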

\begin{proof}
(i) \ Let $P = \set{p_1,\ldots,p_n}$ be a set of $n$ primes, and for each nonempty $M = \set{p_{i_1}, \ldots p_{i_k}} \subseteq P$ let $a_M \df \Po p_{i_1} \mal \ldots \mal \Po p_{i_k}$. Then, $S = \set{a_M: \z \neq M\subseteq P}$ is the desired semigroup generated by $\set{a_{\set{p_i}}: 1 \leq i \leq n}$; the identity element is $a_P$.

(ii) \ Let $e$ be the neutral element of $G$. If $e = \z$, then $a = a \mal e = a \mal \z = \z$ for all $a \in G$, and thus, $\card{G} = 1$. Similarly, if $e = \set{0}$ we have $\card{G} = 1$. Thus, suppose that $e \not\subseteq \set{0}$; it is easy to see that then $a \not\subseteq \set{0}$ for all $a \in G$. Let $n = \min(e \setminus \set{0})$. Since $e \mal e = e$, there are $k,m \in e$ with $n = k \cdot m$. Minimality of $n$ and $n \neq 0$ imply $n = k$ and $m = 1$ or $n = m$ and $k = 1$. In any case, $n = 1$, and thus, $1 \in e$.

Suppose that $a \in G$. Since $a \mal \inv{a} = e$ and $1 \in e$, we have $1 \in a \cap \inv{a}$ and hence, $a = a \mal \set{1} \subseteq a \mal \inv{a} = e$.

Conversely, $e = e \mal \set{1} \subseteq e \mal a = a$, so that altogether $a = e$.
\end{proof}

\section{Decidability of theories}

Recall that for a BAO $\B$, we denote by $\B_0$ the smallest subalgebra of $\B$, i.e. the subalgebra of $\B$ generated by the constants. In this section we consider the problems $\FO\B$, $\Eq\B$, and $\EqSat\B$ for the algebras $\CmN$, $\CmNN$, $\CmN^{\plus}$, $\CmNN^{\plus}$, $\CmN^{\mal}$, and $\CmNN^{\mal}$. If $\B$ is one of these algebras, we denote by $\B^d$ the algebra enhanced by an additional operator $d$ which represents a discriminator function on $\B$. A \emph{conjunctive grammar} is a context--free grammar with an explicit intersection operation \cite{okh2001}. This section largely draws together work by \citet{Okhotin03}, \citet{cf:j+o08a}, and \citet{{PV05}}.

We have the following undecidability results. If T is a Turing Machine, we can define the language
$\mbox{VALC}(T)$ of {\em computations} of $T$, over the alphabet $\Sigma = \{0, \ldots, k-1\}$, for some $k>0$. It does not really matter how these computations are encoded: the important point here is that $\mbox{VALC}(T) = \emptyset$ if and only if the language accepted by $T$ is empty. We may assume without loss of generality that no strings in $\mbox{VALC}(T)$ begin with the letter $0$. 
Any string $s \in \Sigma^*$ which does not begin with $0$ may be regarded as a base-$k$ representation of a positive integer
$\sharp(s)$. Thus, we obtain a 1--1 mapping $f_k: \mbox{VALC}(T) \rightarrow \{a\}^*$ given by $f_k(s) = a^{\sharp(s)}$. Thus, $f_k(\mbox{VALC}(T))$ is a language over the 1-element alphabet $\{a\}$.

\begin{lemma} \label{lem:rec}
\cite{cf:j+o08a})
\begin{enumerate} [label=\textup{(}\emph{\roman*}\textup{)}]
\item For every Turing Machine $T$, we can effectively construct conjunctive grammars $G$ and $G'$ over the alphabet $\{a\}$ such that
$L(G) = f_k(\mbox{VALC}(T))$.
\item If $a \subseteq \omega$ is recursive, there exists a finite system of equations of the form $\tau_i(y,x_1,\ldots,x_n) = \sigma_i(y,x_1,\ldots,x_n)$ in the language with $\cup, \cap, \plus$ such that its unique solution is $y = a$ and $x_i = b_i$ for some $\klam{b_1,\ldots,b_n} \in (2^\omega)^n$.
\end{enumerate}
\end{lemma}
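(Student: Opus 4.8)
The plan is to derive both parts from two classical ingredients --- the representation of valid computation histories as an intersection of context-free languages, and the Je\.z--Okhotin transfer of conjunctive languages to a one-letter alphabet via base-$k$ valuation --- and then to repackage the resulting grammars as resolved systems of equations over $\cup,\cap,\plus$.

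For part (\emph{i}) I would first recall the standard fact that $\mbox{VALC}(T)$ over $\Sigma=\{0,\ldots,k-1\}$ is the intersection $L_1\cap L_2$ of two (linear) context-free languages, where $L_1$ enforces local consistency of the even-indexed configuration transitions and $L_2$ that of the odd-indexed ones (together with the initial and accepting conditions). Since conjunctive grammars have an explicit conjunction, $L_1\cap L_2$ is immediately a conjunctive language over $\Sigma$. The crux is to push this across the valuation $s\mapsto a^{\sharp(s)}$: here I would invoke the Je\.z--Okhotin positional-encoding construction, which builds from a conjunctive grammar over $\Sigma$ a conjunctive grammar over $\{a\}$ generating $\{a^{\sharp(s)}:s\in L\}$. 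The mechanism exploits $\sharp(sd)=k\cdot\sharp(s)+d$: in the unary encoding, multiplication by $k$ and the addition of a digit are expressible through concatenation (that is, through $\plus$), while the auxiliary languages enforcing correct positional behaviour are cut out by intersections. This produces $G$ with $L(G)=f_k(\mbox{VALC}(T))$, and the companion grammar $G'$ is obtained symmetrically to generate $\{a\}^\ast\setminus f_k(\mbox{VALC}(T))$, which is again conjunctive because $\mbox{VALC}(T)$ is recursive.

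For part (\emph{ii}) I would use the equivalence between conjunctive grammars and resolved systems of language equations: over the one-letter alphabet, where concatenation is $\plus$ and conjunction is $\cap$, a rule $A\to\alpha_1\,\&\,\cdots\,\&\,\alpha_m$ becomes an equation whose least solution is the generated language. Since $a$ is recursive, both $a$ and $\cmp{a}=\omega\setminus a$ are r.e.; by the representation of r.e.\ sets as least solutions underlying part (\emph{i}), each is the distinguished component of the least solution of a finite system over $\cup,\cap,\plus$. This yields two systems $E,E'$ on disjoint variable sets, with least-solution components $y=a$ and $y'=\cmp{a}$. I would then take their union together with the single extra equation $y\cap y'=\z$. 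Because $\cup$, $\cap$ and $\plus$ are all monotone, the least solution lies pointwise below every solution, so in any solution $y\supseteq a$ and $y'\supseteq\cmp{a}$; since $a\cup\cmp{a}=\omega$, any $n\in y\setminus a$ lies in $\cmp{a}\subseteq y'$, hence in $y\cap y'=\z$, a contradiction, so $y=a$ and dually $y'=\cmp{a}$.

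The main obstacle is twofold, and both difficulties live inside the Je\.z--Okhotin machinery imported above. First, the base-$k$ transfer of part (\emph{i}) is the genuinely nontrivial step: representing positional arithmetic over a single letter requires a carefully layered grammar whose intersections enforce the digit-by-digit constraints exactly, and this is where all the ingenuity resides. Second, the uniqueness assertion of part (\emph{ii}) is stronger than pinning down the distinguished variables: the auxiliary variables of $E$ and $E'$ must also receive forced values, whereas the monotonicity argument above fixes only $y$ and $y'$. To make the full system solution-unique one must rely on the structural determinacy of the encoding --- the value of each variable at argument $n$ is extracted from a finite, decidable base-$k$ computation and is therefore determined once top-level membership is decided. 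Verifying that this determinacy propagates through every auxiliary variable is the delicate heart of the construction, and it is precisely this point that \cite{cf:j+o08a} establishes in detail.
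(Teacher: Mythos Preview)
The paper does not supply a proof of this lemma: it is stated with a citation to Je\.z and Okhotin and used as a black box. There is therefore nothing in the paper itself to compare your argument against.

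Your outline is a reasonable roadmap of the ideas behind the cited result, and you correctly identify that the substantive work lives in the Je\.z--Okhotin positional encoding. One point worth flagging: in part~(\emph{ii}) you invoke ``the representation of r.e.\ sets as least solutions underlying part~(\emph{i})'', but part~(\emph{i}) only delivers the specific sets $f_k(\mbox{VALC}(T))$, not arbitrary r.e.\ subsets of $\omega$. Passing from encoded computation histories to an arbitrary recursive $a$ requires an additional layer of simulation inside the equation system, and that step is part of what the cited construction supplies rather than a direct consequence of part~(\emph{i}). Your complement-plus-disjointness device for pinning down $y$ is clean; as you yourself note, propagating uniqueness to all of the auxiliary variables is where the fine structure of the Je\.z--Okhotin grammar is genuinely needed.
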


First, we compare the theories of these algebras.

\begin{thm}\label{thm:eq}
\begin{enumerate}[label=\textup{(}\emph{\roman*}\textup{)}]
\item $\Eq(\N^{\plus}) = \Eq(2^\omega, \plus, \set{0})$.
\item $\Eq\CmNN^{\plus} = \Eq\CmN^{\plus}$.
\item $\EqSat\CmNN^{\plus} \neq \EqSat\CmN^{\plus}$.
\item $\Eq\CmNN^{\plus,d}\neq \Eq\CmN^{+,d}$.
\item $\Eq\CmNN \neq \Eq\CmN$.
\end{enumerate}
\end{thm}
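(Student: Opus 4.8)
\emph{Parts (i) and (ii).}
For (i) the plan is to sandwich $\Eq\N^{\plus}$ between the two sides. Both $\N^{\plus}$ and $\klam{2^\omega,\plus,\set{0}}$ are commutative monoids, and $n\mapsto\set{n}$ embeds the former into the latter (since $\set{n}\plus\set{m}=\set{n+m}$), so $\Eq\klam{2^\omega,\plus,\set{0}}\subseteq\Eq\N^{\plus}$. Conversely $\N^{\plus}$ is the free commutative monoid on one generator, so an identity $\sum_i a_i x_i=\sum_i b_i x_i$ holds in it \tiff $a_i=b_i$ for all $i$ (evaluate at the unit tuples); these are exactly the consequences of the commutative-monoid axioms and hence hold in \emph{every} commutative monoid, in particular in $\klam{2^\omega,\plus,\set{0}}$. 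The two inclusions give equality. For (ii), $\CmNN^{\plus}\le\CmN^{\plus}$ yields $\Eq\CmN^{\plus}\subseteq\Eq\CmNN^{\plus}$ for free, so it suffices to prove $\Var(\CmNN^{\plus})=\Var(\CmN^{\plus})$. The key step is to exhibit each $\B_n$ as a quotient of $\CmNN^{\plus}$: the congruence $\theta_{n+1}$ is generated by the cofinite element $c(\set{n+1})=\uparrow(n{+}1)$ (Corollary \ref{cor:fccong}), which lies in $\CmNN^{\plus}$ by Lemma \ref{lem:fc}, and restricting $\theta_{n+1}$ to the finite--cofinite algebra identifies two sets precisely when they agree on $[0,n]$; the quotient is thus the power set of $[0,n]$ with the same $\circ$-table, i.e. $\cong\B_n$. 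Since $\Var(\CmN^{\plus})=\Var\set{\B_n:n\in\omega}$ as shown above, we obtain $\Var(\CmN^{\plus})\subseteq\Var(\CmNN^{\plus})\subseteq\Var(\CmN^{\plus})$.

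\emph{Parts (iii) and (iv).} Here the plan is to feed Lemma \ref{lem:rec}(ii) the recursive set $E$ of even numbers, obtaining a finite system $\set{\tau_i=\sigma_i}$ over $\set{\cup,\cap,\plus}$ whose unique solution over $2^\omega$ has a coordinate equal to $E$; I collapse it to the single equation $\epsilon:\ \bigcup_i(\tau_i\symdiff\sigma_i)=\z$. Because $E$ is neither finite nor cofinite, the unique solution is not a tuple over the finite--cofinite algebra, so $\epsilon$ is solvable in $\CmN^{\plus}$ but not in $\CmNN^{\plus}$; this gives $\epsilon\in\EqSat\CmN^{\plus}\setminus\EqSat\CmNN^{\plus}$, proving (iii). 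For (iv) I would convert this non-solvability into a valid identity using the added discriminator: with $D(\vec x)=\bigcup_i(\tau_i\symdiff\sigma_i)$, the identity $d(D(\vec x))=\top$ holds throughout $\CmNN^{\plus,d}$ (there $D(\vec x)\ne\bot$ always, as $\epsilon$ has no finite--cofinite solution) but fails in $\CmN^{+,d}$ at the even-number solution (where $D=\bot$, so $d(D)=\bot$). Since $d$ restricts correctly, $\CmNN^{\plus,d}\le\CmN^{+,d}$, so this single identity separates the two equational theories.

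\emph{Part (v).} The strategy mirrors (iv), the decisive simplification being that in the full language the discriminator is \emph{term-definable}: by Theorem \ref{thm:disc0}(i), $t(x)=\omega\plus(\set{0}\mal x)$ is a discriminator on $\CmN$, and the same term is a discriminator on the subalgebra $\CmNN$. Thus it suffices to show $\EqSat\CmNN\ne\EqSat\CmN$ and then apply $t$ to the merged symmetric difference, yielding an identity in $\Eq\CmNN\setminus\Eq\CmN$. This is where the real obstacle sits. The witness of (iii)--(iv), the even numbers, is itself circuit-definable and so lies in $\CmNN$; I must instead supply Lemma \ref{lem:rec}(ii) with a recursive set $a$ that is \emph{not} circuit-definable, so that the system's unique solution has the coordinate $a\notin\CmNN$ and hence no solution lies in $\CmNN$. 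Producing such an $a$ is the heart of the proof: it amounts to showing that the circuit-definable sets are a proper subclass of the recursive sets, which is precisely the kind of expressiveness limitation recorded above and in \cite{pd_acfunc}. I would stress that a naive effective diagonalisation is unavailable, since membership for the full operator set is not known to be decidable, so the non-definability of $a$ must come from a structural argument rather than from counting. Once $a$ is fixed, the identity $t\bigl(\bigcup_i(\tau_i\symdiff\sigma_i)\bigr)=\omega$ is valid in $\CmNN$ yet fails in $\CmN$ at the solution containing $a$, completing the separation.
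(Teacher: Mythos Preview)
Your proof is correct and, for parts (i), (ii), and (v), follows essentially the same path as the paper. For (v) in particular you correctly isolate the crux as the existence of a recursive set outside $\CmNN$; the paper supplies this concretely via the observation (from \cite{pd_acfunc}) that every circuit-definable set lies in the bounded hierarchy $\BH$, which is contained in the zeroth Grzegorczyk class $\cE^0_*$, so any recursive set outside $\cE^0_*$ serves as the required $a$. The remainder of your argument for (v)---merging the system from Lemma~\ref{lem:rec}(ii) into a single equation and applying the term-definable discriminator $t(x)=\omega\plus(\set{0}\mal x)$---matches the paper's reasoning.

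The one genuine difference is in (iii) and (iv). You invoke Lemma~\ref{lem:rec}(ii) to manufacture a system of equations whose unique solution has the even numbers as a coordinate, then merge via symmetric differences and (for (iv)) discriminate. The paper instead writes down the single one-variable equation
\[
x \plus \set{1} = \cmp{x},
\]
whose unique solution in $\CmN^{\plus}$ is the set of even numbers; since this set is neither finite nor cofinite, the equation is unsolvable in $\CmNN^{\plus}$, giving (iii) at once, and (iv) follows by noting that $d\bigl((x\plus\set{1})\symdiff\cmp{x}\bigr)=\omega$ then holds in $\CmNN^{\plus,d}$ but not in $\CmN^{+,d}$. Your route is sound, but the paper's explicit witness avoids the machinery of Lemma~\ref{lem:rec} entirely and makes (iii)--(iv) self-contained and transparent; conversely, your approach has the virtue of being uniform with (v), where an ad hoc witness is no longer available.
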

\begin{proof}
(i) \ The mapping $f: \omega \to \set{a \subseteq \om: a \text{ is finite}}$ which maps $n$ to $\set{n}$ is an embedding of monoids, and thus, $\Eq(2^\omega, \plus, \set{0}) \subseteq \Eq(\N^+)$. The reverse inclusion follows from the fact that $\N^+$ is the free monoid on a single generator.

(ii) \ Since $\CmNN^{\plus} \leq \CmN^{\plus}$, it follows that $\CmNN^{\plus} \in \Var(\CmN^{\plus})$. Conversely, each $\B_n$ is in $\Var(\CmNN^{\plus})$ by Corollary \ref{cor:fccong}, and thus, $\CmN^{\plus} \in \Var(\CmNN^{\plus})$.

(iii) \ The equation
\begin{gather}\label{e1}
x \plus \set{1} = \cmp{x}
\end{gather}
has a unique solution in $\CmN^{\plus}$, namely, the set of even numbers, which is not in $\CmNN^{\plus}$.

(iv) This is a slight generalization of Theorem \ref{thm:leq}(2). The equation \eqref{e1} has no solution in $FC(\om)$, i.e.
$(\forall x)[x \plus \set{1} \neq \cmp{x}]$ holds in $\CmNN$. This is equivalent to the equation $d((x \plus \set{1}) \symdiff \cmp{x}) = \omega$ which is not valid in $\CmN^{+,d}$.

(v) \ Let $a \in \CmN \setminus \CmNN$ be recursive. Such  set exists, since every every set definable by an arithmetic circuit is in the bounded hierarchy $\BH$ \cite{pd_acfunc}, and the bounded hierarchy is known to be contained within the zeroth
Grzegorczyk class, $\cE^0_*$. By Lemma \ref{lem:rec} there is a first order sentence $(\exists x)\phi(x)$ such that $\CmN \models \phi(x/s)$ \tiff $s = a$. It follows that $\CmNN \not\models (\exists x)\phi(x)$, i.e. $\CmNN \models (\forall x)\neg\phi(x)$. Since $\CmNN$ is a discriminator algebra, there is an equation $\tau(x) = \sigma(x)$, such that $\CmNN \models (\forall x)\neg\phi(x)$ \tiff $\CmNN \models \tau(x) = \sigma(x)$. Since $\CmN \models (\exists x)\phi(x)$, $\tau(x) = \sigma(x)$ cannot hold in $\CmN$.
\end{proof}

Given any conjunctive grammar $G$  with non-terminals $X_1, \ldots, X_n$ over the alphabet $\{a\}$, we may effectively construct a system of language equations $\Eqs$ in variables $V_1, \ldots, V_n$, with the property that $\Eqs$ has a unique least (under componentwise-inclusion) solution $S_1^0, \ldots, S_n^0$ and, moreover, for all $i$ ($1 \leq i \leq n$), $S_i$ is the set of strings of $\{a\}^*$ to which $G$ assigns the category $X_i$. Let us assume that $X_1$ is the start-symbol of $G$; i.e.,~$L(G)$ is the set of strings
to which $G$ assigns category $X_1$.
\begin{thm}
Let $O$ be any collection of isotone operators on $\N$ with $\plus \in O$. Then $\EqSat\Cm (\N,O)$ is co--r.e.-complete.
\label{theo:eqSatPlusCoReComplete}
\end{thm}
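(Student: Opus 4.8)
The plan is to prove co-r.e.-completeness by establishing the two directions separately: first that $\EqSat\Cm(\N,O)$ is co-r.e., and then that it is co-r.e.-hard by a reduction from the emptiness problem for Turing machines via conjunctive grammars.

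For the co-r.e. upper bound, I would argue that the complement---the set of satisfiable equations---is r.e. Given an equation $\tau(\vec{x}) = \sigma(\vec{x})$, a witnessing solution assigns subsets $a_1, \ldots, a_n \subseteq \omega$ to the variables. The key observation is that since all operators in $O$ are isotone and $\plus \in O$, membership of a fixed number $k$ in $\tau(\vec{a})$ depends only on bounded initial segments of the $a_i$; more precisely, the value of a term applied to the $a_i$, restricted to $[0,N]$, is determined by the restrictions $a_i \cap [0,N]$ together with a bounded amount of additional data. This lets me enumerate candidate solutions by their finite initial segments and verify the equation on larger and larger initial segments, so that a genuine solution is eventually certified. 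I would need to be somewhat careful here about how complementation interacts with the bounded-segment reasoning, since $\cmp{a} \cap [0,N]$ is determined by $a \cap [0,N]$, so this direction is largely routine once the dependency on initial segments is set up correctly.

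For the co-r.e.-hardness, I would reduce from the (co-r.e.-complete) emptiness problem ``$L(T) = \z$'' for a Turing machine $T$. By Lemma \ref{lem:rec}(i), from $T$ I effectively obtain a conjunctive grammar $G$ over $\set{a}$ with $L(G) = f_k(\mbox{VALC}(T))$, so that $L(T) = \z$ if and only if $L(G) = \z$. Using the construction described immediately before the theorem, $G$ yields a system of language equations $\Eqs$ in variables $V_1, \ldots, V_n$ whose unique least solution $S_1^0, \ldots, S_n^0$ has $S_1^0 = L(G)$ (the category assigned to the start symbol $X_1$). Identifying words over $\set{a}$ with natural numbers via $a^m \leftrightarrow m$, these language equations become equations over $\Cm(\N,O)$ using $\cup$, $\cap$, and $\plus$ (concatenation over a one-letter alphabet is exactly $\plus$); the remaining isotone operators of $O$ simply go unused. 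I would then combine $\Eqs$ with an auxiliary equation forcing the start variable to equal $\z$ and, crucially, forcing the solution to be the \emph{least} one, so that the whole system is satisfiable precisely when $L(G) = \z$.

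The main obstacle is this last point: language equations arising from conjunctive grammars pin down only the \emph{least} solution, whereas satisfiability over $\Cm(\N,O)$ quantifies over \emph{all} solutions, and a system may have spurious large solutions that do not correspond to the grammar. To handle this I would exploit the structure of the equations to encode leastness directly---arguing that the monotone fixed-point structure of conjunctive grammar equations, together with the isotonicity of the operators, forces any solution to dominate the least solution, and then adding the constraint $V_1 = \z$ so that satisfiability of the augmented system is equivalent to $S_1^0 = \z$, i.e.\ to $L(G) = \z$. Getting this equivalence exactly right---ruling out that an unintended solution makes the system satisfiable even when $L(G) \neq \z$---is the delicate step, and is where I would invest the most care, appealing to the detailed form of the construction of $\Eqs$ from $G$ cited before the theorem and to the work of \citet{Okhotin03} and \citet{cf:j+o08a}.
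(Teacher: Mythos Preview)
Your lower-bound argument matches the paper's. One reduces Turing-machine emptiness (a co-r.e.-complete problem) to satisfiability of a system of integer-set equations by passing through a conjunctive grammar $G$ over $\{a\}$, translating concatenation to $\plus$, and adjoining the single extra equation $X_1=\emptyset$. Your worry about ``spurious large solutions'' is resolved exactly by the observation you yourself make: the grammar equations have the form $X_i=\phi_i(\bar X)$ with each $\phi_i$ monotone, so by Knaster--Tarski every solution dominates the least one $\bar S^0$; hence any solution of the augmented system gives $\emptyset = X_1 \supseteq S_1^0 = L(G)$. The paper records exactly this chain of equivalences without treating it as delicate.

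Your upper-bound argument, however, goes in the wrong direction. By definition $\EqSat\Cm(\N,O)$ \emph{is} the set of satisfiable equations; showing it is co-r.e.\ means showing that the \emph{un}satisfiable equations are r.e., not that witnesses to satisfiability can be enumerated and ``eventually certified''. Your certification idea could not work in any case: a solution $\bar a\in(2^\omega)^m$ is an infinite object, and no finite amount of data about it confirms $\tau(\bar a)=\sigma(\bar a)$ globally. (Your aside about complementation is also misplaced: the hypothesis is that the operators are isotone, and the paper's argument treats terms built from such operators, with no Boolean complement in play.)

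The paper's route is a compactness argument. One shows
\[
\tau(\bar x)=\emptyset \text{ has a solution in } (2^\omega)^m
\quad\Longleftrightarrow\quad
\forall n\;\bigl(\tau(\bar x)\cap[0,n]=\emptyset \text{ has a solution in } (2^{[0,n]})^m\bigr).
\]
The left-to-right direction is immediate from isotonicity (restrict a global solution to $[0,n]$). For the converse, the sets $V_n$ of $[0,n]$-solutions are finite and non-empty; linking $\langle\bar s,n\rangle$ to $\langle\bar t,n+1\rangle$ whenever $\bar s\subseteq\bar t$ gives a finitely branching infinite tree, and K\"onig's lemma produces a chain $\bar s_0\subseteq\bar s_1\subseteq\cdots$ whose union $\bar s$ satisfies $\tau(\bar s)\cap[0,n]=\emptyset$ for every $n$, hence $\tau(\bar s)=\emptyset$. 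Thus unsatisfiability is witnessed by a single $n$ at which the finite problem fails, which is decidable for fixed $n$ and therefore r.e.\ overall.
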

\begin{proof}
For the lower bound, it suffices to
establish the result in the case $\Cm (\N, \{+\}) = \Cm \N^{\plus}$.
We use the fact that the emptiness of the languages accepted by a
Turing machine $T$ is equivalent to the validity of the language
equations $\Eqs$, as outlined above. We must translate the language
equations in $\Eqs$ in the logical signature $\{\varepsilon, \{a\},
\cup, \cap, \cdot\}$, (where $\cdot$ denotes concatenation) into
integer-set equations, by replacing $\varepsilon$ by $\{0\}$, $\{a\}$
by $\{1\}$, and $\cdot$ by $\plus$. Let the result of this translation
be $\Eqs^*$. If $g: \{a\}^* \rightarrow \N$ is the isomorphism given
by $a^k \mapsto k$, then $S_1, \ldots, S_n$ is a solution of $\Eqs$ if
and only if $g(S_1), \ldots, g(S_n)$ is a solution of $\Eqs^*$.

Altogether, we have:
\begin{eqnarray*}
\mbox{Acc}(T) = \emptyset & \Leftrightarrow & \mbox{VALC}(T) = \emptyset \\
\ & \Leftrightarrow & f_k(\mbox{VALC}(T)) = \emptyset\\
\ & \Leftrightarrow & L_G = \emptyset\\
\ & \Leftrightarrow & S_1^0 = \emptyset\\
\ & \Leftrightarrow &
            \text{$\Eqs \cup \{X_1 = \emptyset\}$ has a solution}\\
\ & \Leftrightarrow &
            \text{$\Eqs^* \cup \{X_1 = \emptyset\}$ has a solution.}
\end{eqnarray*}
This establishes that $\EqSat\Cm (\N,O)$ is co-r.e.-hard, as
required.

To show that $\EqSat \Cm (\N,O)$ is co-r.e., it suffices to prove
that, for any $m$-tuple of variables $\bar{x}$ and any term
$\tau(\bar{x})$,
\begin{equation}
\tau(\bar{x}) = \emptyset \text{ has a solution in $(2^\omega)^m$}
\label{eq:solutionLimit}
\end{equation}
if and only if, for all $n$,
\begin{equation}
\tau(\bar{x}) \cap [0,n] = \emptyset \text{ has a solution in $(2^{[0,n]})^m$},
\label{eq:solutionStage}
\end{equation}
since the condition~\eqref{eq:solutionStage} is evidently decidable
for fixed $n$.  In the sequel, if $\bar{s} = (s_1, \ldots, s_m)$ and
$\bar{t} = (t_1, \ldots, t_m)$ are $m$-tuples of sets, we write
$\bar{s} \cap [0,n]$ for the $m$-tuple $(s_1 \cap [0,n], \ldots, s_n
\cap [0,n])$, $\bar{s} \cup \bar{t}$ for the
$m$-tuple $(s_1 \cup \bar{t}_1, \ldots, s_n
\cup \bar{t}_m)$
 and $\bar{s} \subseteq \bar{t}$ for the condition $s_1
\subseteq t_1 \wedge \cdots \wedge s_m \subseteq t_m$.

The direction from~\eqref{eq:solutionLimit}
to~\eqref{eq:solutionStage} is easy. For suppose $\tau(\bar{s}) =
\emptyset$. Then, for all $n$, $\tau(\bar{s}) \cap [0,n]= \emptyset$,
whence, by the monotonicity of the operators on $O$, $\tau(\bar{s}
\cap [0,n]) \cap = \emptyset$.  To show the converse, let $V_n$
denote, for any $n$, the set of pairs $\langle \bar{s},n \rangle$
where $\bar{s}$ is a solution of $\tau(\bar{x}) \cap [0,n] =
\emptyset$ in $(2^{[0,n]})^m$.  Thus, $V_n$ is finite, and,
assuming~\eqref{eq:solutionStage} for all $n$, non-empty.  Define the
directed graph $(V,E)$ by setting $V =
\bigcup V_n$ and
$$
E = \left\{\left(\langle \bar{s}, n \rangle, \langle \bar{t}, n+1 \rangle \right):
   \langle \bar{s},n \rangle \in V_n,\ \ \langle \bar{t}, n+1 \rangle \in V_{n+1}
   \text{ and } \bar{s} \subseteq \bar{t} \right\}.
$$
Thus, $(V,E)$ is a finitely branching, infinite tree, and so has an
infinite path $\langle \bar{s}_0, 0 \rangle, \langle \bar{s}_1, 1
\rangle, \ldots$, where $\bar{s}_0 \subseteq \bar{s}_1 \subseteq
\cdots$. Letting $\bar{s} = \bigcup \bar{s}_n$, we have, for all $n$,
$\tau(\bar{s}) \cap [0,n] = \tau(\bar{s}_n) \cap [0,n] =
\emptyset$. Hence $\tau(\bar{s}) = \emptyset$,
whence~\eqref{eq:solutionLimit} holds.
\end{proof}
It immediately follows from
Theorem~\ref{theo:eqSatPlusCoReComplete} that
\begin{cor}
$\EqSat \CmN$ is co-r.e.-hard.
\label{theo:eqSatPlusTimesCoReHard}
\end{cor}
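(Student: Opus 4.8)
The plan is to read the statement off the lower--bound half of Theorem~\ref{theo:eqSatPlusCoReComplete}. I would begin by noting that $\CmN$ is precisely $\Cm(\N,O)$ for the choice $O=\set{+,\cdot}$: apart from its Boolean apparatus, its operations are exactly the complex liftings $\plus$ and $\mal$ of addition and multiplication. Both liftings are isotone, since $a\subseteq a'$ and $b\subseteq b'$ give $a\plus b\subseteq a'\plus b'$ and $a\mal b\subseteq a'\mal b'$ straight from \eqref{cmf}; and of course $\plus\in O$. The hypotheses of Theorem~\ref{theo:eqSatPlusCoReComplete} are therefore met, and its hardness conclusion applies to $\CmN$.

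More concretely, I would point out that the reduction establishing co--r.e.-hardness in that theorem never touches $\mal$. It translates the language system $\Eqs$ attached to a Turing machine $T$ into the integer--set system $\Eqs^{*}$ by sending $\varepsilon\mapsto\set{0}$, $\set{a}\mapsto\set{1}$, and concatenation $\mapsto\plus$, and then uses the chain of equivalences ending in ``$\Eqs^{*}\cup\set{X_1=\emptyset}$ has a solution'' to encode emptiness of $\mbox{Acc}(T)$. Every symbol occurring here --- $\cup$, $\cap$, $\plus$ --- is already available in $\CmN$, so the identical reduction witnesses that $\EqSat\CmN$ is co--r.e.-hard, with no appeal to the multiplicative structure at all.

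The one point deserving comment is why the statement is confined to hardness rather than completeness, and here lies the genuine obstacle. The matching co--r.e. upper bound in Theorem~\ref{theo:eqSatPlusCoReComplete} rests on the equivalence between solvability of $\tau(\bar{x})=\emptyset$ over $2^\omega$ and solvability over every finite segment $[0,n]$, extracted by a K\"onig's--lemma argument. That equivalence depends on each operator respecting initial segments, i.e.\ on the value of a term on $[0,n]$ being determined by the arguments' traces on $[0,n]$. Addition and the Boolean operations enjoy this property, but $\mal$ does not: because $0\cdot m=0$ for every $m$, the presence of $0$ in $a\mal b$ is forced as soon as $0\in a$ and $b\neq\emptyset$, whatever the size of the elements of $b$. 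Thus $(a\mal b)\cap[0,n]$ is not a function of $a\cap[0,n]$ and $b\cap[0,n]$ --- taking $a=\set{0}$ and comparing $b=\set{n+1}$ with $b=\emptyset$ already separates them --- so the finite--approximation method underpinning the upper bound does not survive the passage from $\plus$ to $\set{\plus,\mal}$. This is exactly why we record only the lower bound and leave the co--r.e. membership of $\EqSat\CmN$ untreated.
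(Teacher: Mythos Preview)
Your proof is correct and follows the paper's own approach exactly: the paper simply records the corollary as an immediate consequence of Theorem~\ref{theo:eqSatPlusCoReComplete}, and you have filled in the routine verification that $O=\{+,\cdot\}$ meets its hypotheses and that the lower-bound reduction lives entirely in the $\{\cup,\cap,\plus\}$-fragment already present in $\CmN$. Your closing paragraph explaining why only hardness survives---the failure of the initial-segment property for $\mal$ owing to the absorbing element $0$---is a useful gloss that the paper itself does not spell out.
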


In Corollary \ref{cor:core} we showed that $\Eq\CmNN^{\plus}$ is co--re. On the other hand, it is not obvious that we can find a (computable) bound for the smallest witnesses of {\em in}equations in these languages.

While the membership problem for $\CmNN$ is a word problem, the satisfaction problem \eqref{sc} is related to the equational theory: 

\begin{thm}
The equational theory of $\CmEN^\mal$ is decidable \tiff the satisfaction problem \eqref{sc} is decidable.
\end{thm}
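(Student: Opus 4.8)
The plan is to show that the two problems are mutually Turing–reducible, using the nonzero annihilator $\set{0}$ of $\CmEN^\mal$ as a surrogate discriminator, exactly as in \eqref{discmal} and \eqref{discmal1}. The clean half of the argument is purely syntactic. The satisfaction problem \eqref{sc} asks, for given $k$ and term $\tau$, whether $k\in\tau(\vec x)$ has a solution, and this is precisely the question whether the equation $\set{k}\cap\tau=\z$ \emph{fails}; hence an oracle for $\Eq\CmEN^\mal$ decides \eqref{sc}. Conversely, writing $\rho\df\tau\symdiff\sigma$, the equation $\tau=\sigma$ is valid \tiff the inequation $\rho\neq\z$ has no solution, \tiff no assignment satisfies $0\in\set{0}\mal\rho$ (by the annihilator identity), \tiff the single pair $\klam{0,\ \set{0}\mal\rho}$ is a negative instance of \eqref{sc}. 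The annihilator thus collapses the implicit quantifier over the witnessing integer to a single query at $k=0$, so an oracle for \eqref{sc} decides $\Eq\CmEN^\mal$.

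The one point requiring care is that $\Eq\CmEN^\mal$ ranges its variables over $\CmEN^\mal$, whereas \eqref{sc} ranges them over all of $2^\om$; the reductions above are correct exactly when satisfiability over $\CmEN^\mal$ coincides with satisfiability over $2^\om$ (equivalently, when $\Eq\CmEN^\mal=\Eq\CmN^\mal$). For a \emph{positive} target this is clean: by induction on $\tau$, whether $k\in\tau(\vec a)$ depends only on the restriction of each $a_i$ to the divisors of $k$, since a factorisation $k=u\cdot v$ with $k>0$ forces $u,v\mid k$. Hence any witness in $2^\om$ may be truncated to the finite sets $a_i\cap\set{d:d\mid k}$, which lie in $\CmEN^\mal$ (it contains every finite and cofinite set). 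So the two notions of satisfiability agree for every $k>0$.

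The main obstacle is the target $k=0$, which is exactly the target the annihilator reduction produces. Membership of $0$ is not controlled by any finite part of the assignment: $0\in a\mal b$ depends on the emptiness of $a$ and of $b$, and emptiness (dually, fullness) of a subterm is a genuinely infinitary condition that is not preserved under truncation. The plan here is to trace a finite derivation of $0\in\rho(\vec a)$: each existential demand ``$S\neq\z$'' arising at a $\mal$–node and each ``$S\neq\om$'' arising at a complement–node is met by a single element whose membership is either divisor–localisable or forces recursion into a strict subterm, so that only finitely many atomic facts $s\in a_i$ / $s\notin a_i$, together with finitely many emptiness/fullness demands on subterms, are ever invoked. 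One then realises the membership facts by a finite set and discharges each emptiness/fullness demand by an inductive appeal to the same statement on the (strictly smaller) subterm, collapsing the relevant variables to $\z$ or $\om$ inside $\CmEN^\mal$.

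The delicate step, on which the whole equivalence turns, is showing that these finitely many demands can be met \emph{simultaneously} by one assignment in $\CmEN^\mal$: the subterms share variables, so the emptiness, fullness, and membership requirements must be shown jointly consistent. I expect this to be the real content, and it is naturally phrased as a strengthened induction hypothesis asserting, for every term $S$, that any ``type'' (contains $0$, is empty, is full, or contains a prescribed positive element) realisable over $2^\om$ is realisable over $\CmEN^\mal$ by a finite-or-cofinite assignment, with the realisations for a conjunction of demands chosen coherently. Granting this domain-matching lemma, the two annihilator reductions of the first paragraph are valid, and decidability of $\Eq\CmEN^\mal$ is equivalent to decidability of \eqref{sc}.
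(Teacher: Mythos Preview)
Your first paragraph is the paper's entire proof. The paper gives precisely the two syntactic reductions you state: for $\Rightarrow$, the satisfiability of $\set{n}\cap\tau\neq\z$ is the negation of the equation $\set{n}\cap\tau=\z$; for $\Leftarrow$, one replaces $\tau=\sigma$ by $\tau=\z$ (via symmetric difference) and observes that $(\forall\vec x)[\tau(\vec x)=\z]\iff\neg(\exists\vec x)[0\in\set{0}\mal\tau(\vec x)]$ using the annihilator. Nothing more.

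Everything from your second paragraph onward is absent from the paper. The paper's proof simply treats the quantifiers in the equational theory and in the satisfaction problem as ranging over the same domain, without comment. Your observation that $\Eq\CmEN^\mal$ quantifies over $\CmEN^\mal$ while \eqref{sc} ostensibly quantifies over $2^\omega$ is a legitimate reading of the text, and if one insists on that reading then the paper's argument is incomplete on exactly the point you raise. But you have not filled that gap either: your ``domain-matching lemma'' is only sketched, the joint-realisability step is explicitly left open, and the proposal ends with ``Granting this domain-matching lemma\ldots''. So as written your proof is conditional on an unproved claim that you yourself introduced.

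In short: the paper's proof is your first paragraph, taken at face value with both problems over the same structure. Your additional material either addresses a genuine lacuna the authors glossed over---in which case you have identified but not closed it---or it is an over-reading of \eqref{sc}, in which case it is unnecessary. Either way, it goes well beyond what the paper does.
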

\begin{proof}
\ar Let $n \in \omega$ and $\tau(\vec{x})$ be a term with variables $\vec{x}$. Then,
\begin{align*}
&\exists(\vec{x})[\set{n} \cap \tau(\vec{x}) \neq \z] \Iff \neg((\forall \vec{x})[\set{n} \cap \tau(\vec{x}) = \z]).
\end{align*}
\al Suppose that $\tau(\vec{x}), \ \sigma(\vec{x})$ are terms with variables among $\vec{x}$; w.l.o.g. we may suppose that $\sigma(\vec {x}) = \z$. Then,
\begin{align*}
&(\forall \vec{x})[\tau(\vec{x}) = \z] 
\Iff (\forall \vec{x})[0 \not \in \set{0} \mal \tau(\vec{x})]
\Iff \neg( (\exists \vec{x})[0 \in \set{0} \mal \tau(\vec{x})]).
\end{align*}
\end{proof}
As for equational theories, results are known as long as we have the
wherewithal to convert equations into inequations. Determining whether
an equation belongs to the equational theory of a language $\cL$ over
some interpretation $\fA$ is the co-problem of determining whether an
{\em in}equation in $\cL$ is satisfiable in $\fA$.  If we have a
discriminator at our disposal, then \eqref{disceq0}, \eqref{disceq1}
and Theorem~\ref{theo:eqSatPlusCoReComplete} imply
\begin{thm}
The set $\Eq\CmN^{\plus,d}$ is r.e.-hard. Hence, $\Eq\CmN$ is r.e.-hard.
\label{theo:eqPlusTimesReHard}
\end{thm}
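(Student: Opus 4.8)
The plan is to reduce the complement of $\EqSat\CmN^{\plus}$ to $\Eq\CmN^{\plus,d}$, exploiting the discriminator to turn ``has no solution'' into ``is a valid equation''. By Theorem~\ref{theo:eqSatPlusCoReComplete} (applied with $O = \set{\plus}$), the set $\EqSat\CmN^{\plus}$ is co-r.e.-complete; hence its complement --- the set of equations $\tau(\vec{x}) = \sigma(\vec{x})$ in the language of $\CmN^{\plus}$ that have \emph{no} solution in $(2^\omega)^m$ --- is r.e.-complete, and in particular r.e.-hard. It therefore suffices to exhibit a computable many-one reduction from this complement into $\Eq\CmN^{\plus,d}$, for then hardness is inherited in the correct direction.

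First I would fix the reduction. Given an equation $\tau(\vec{x}) = \sigma(\vec{x})$, map it syntactically to the equation
\begin{gather*}
d(\tau(\vec{x}) \symdiff \sigma(\vec{x})) = \top
\end{gather*}
in the language of $\CmN^{\plus,d}$; this map is plainly computable. Since $\CmN^{\plus,d}$ is a single algebra, membership of this equation in $\Eq\CmN^{\plus,d}$ is just the truth of its universal closure, and by \eqref{disceq1} we have
\begin{gather*}
(\forall \vec{x})[d(\tau(\vec{x}) \symdiff \sigma(\vec{x})) = \top] \Iff (\forall \vec{x})[\tau(\vec{x}) \neq \sigma(\vec{x})].
\end{gather*}
Thus the image equation lies in $\Eq\CmN^{\plus,d}$ precisely when $\tau = \sigma$ has no solution, so the complement of $\EqSat\CmN^{\plus}$ reduces to $\Eq\CmN^{\plus,d}$. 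This establishes the first claim.

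For the second claim I would run the same argument inside $\CmN$ itself. By Corollary~\ref{theo:eqSatPlusTimesCoReHard}, $\EqSat\CmN$ is co-r.e.-hard, so the set of $\CmN$-equations with no solution is r.e.-hard. The crucial point is that, by Theorem~\ref{thm:disc0}(i), the discriminator of $\CmN$ is realised by the genuine \emph{term} $t(x) \df \omega \plus (\set{0} \mal x)$, so it need not be adjoined as a fresh operator: the map $\tau = \sigma \mapsto t(\tau \symdiff \sigma) = \top$ produces actual equations in the signature of $\CmN$, and by \eqref{disceq1} with $d$ instantiated as $t$, the image lies in $\Eq\CmN$ exactly when $\tau = \sigma$ has no solution in $\CmN$. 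This is again a computable many-one reduction, whence $\Eq\CmN$ is r.e.-hard.

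I expect no serious obstacle, since all the machinery is in place; the only points requiring care are (a) that the reduction runs in the right direction, so that an r.e.-hard source reduces \emph{into} the target and hardness is transferred, and (b) that for the second claim the discriminator is term-definable in $\CmN$ rather than merely adjoined, so that the equations built with $d$ are literally equations of $\CmN$. Both are guaranteed by Theorem~\ref{thm:disc0}(i) together with the explicit term $t$.
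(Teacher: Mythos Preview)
Your proposal is correct and follows essentially the same approach as the paper, which simply cites \eqref{disceq0}, \eqref{disceq1} and Theorem~\ref{theo:eqSatPlusCoReComplete}; you have merely spelled out the details. For the ``Hence'' clause, the paper presumably intends the even more direct observation that, since $d$ is realised by a term of $\CmN$ (Theorem~\ref{thm:disc0}(i)), one may translate every $\CmN^{\plus,d}$-equation into a $\CmN$-equation and thereby reduce $\Eq\CmN^{\plus,d}$ to $\Eq\CmN$; your route via Corollary~\ref{theo:eqSatPlusTimesCoReHard} is equally valid.
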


If $\klam{S, \circ}$ is a semigroup, then its \emph{power structure} is the semigroup of complexes of $S$. The following result is quoted by \citet[Theorem 2.3.2]{PV05}:

\begin{thm}\label{thm:etsg}\cite{bay88} \
For a variety $\V$ of semigroups the class of power structures of elements of $\V$ has a decidable elementary theory if and only if $\V \subseteq \Var(\{x\circ y \circ z = x \circ z\})$.
\end{thm}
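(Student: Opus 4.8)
The plan is to establish the two directions of the biconditional separately. Throughout, write $\W \df \Var(\set{x \circ y \circ z = x \circ z})$ and, for a semigroup $\klam{S,\circ}$, let $\mathcal{P}(S)$ denote its power structure (taken, as usual, to contain the empty complex, which acts as an absorbing zero). The assertion then concerns decidability of the first--order theory of the class $\set{\mathcal{P}(S): S \in \V}$ in the signature consisting of the single binary operation $\circ$.

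For the direction $\V \subseteq \W \Implies$ decidability, the crucial observation is that the defining identity forces triple products to collapse: if $S \in \W$ and $A,B,C$ are nonempty complexes, then $A \circ B \circ C = A \circ C$, since $abc = ac$ for every choice of representatives. Consequently $\z$ is the unique zero of $\mathcal{P}(S)$ and hence first--order definable; every nonempty element satisfies $A^2 = A^3$; and any product of length at least three with nonempty interior reduces to the product of its two endpoints. First I would record these structural facts, and then exploit this ``shallowness'' of the multiplication by an Ehrenfeucht--Fra\"{\i}ss\'e / Feferman--Vaught analysis: because products never iterate into genuine complexity, the $n$--quantifier type of $\mathcal{P}(S)$ should be determined by a finite, computable collection of invariants of $S$, allowing the theory to be interpreted in a decidable ambient theory (for instance a monadic or Presburger--style theory of the underlying set system, or obtained directly via quantifier elimination). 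The final point is to check that this reduction is \emph{uniform} in $S \in \V$, which is what upgrades per--structure decidability to decidability of the theory of the whole class.

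For the converse I would argue by contraposition. Assuming $\V \not\subseteq \W$, fix $S \in \V$ and witnesses $a,b,c \in S$ with $a \circ b \circ c \neq a \circ c$, and aim to produce a first--order interpretation of a known undecidable problem inside $\mathcal{P}(S)$ using only $\circ$. In the spirit of the $\mbox{VALC}(T)$ encoding used elsewhere in the paper, the target would be the set of halting computations of a Turing machine, or equivalently the word problem for finitely presented semigroups. The failure of $x \circ y \circ z = x \circ z$ supplies exactly the degrees of freedom required: distinct middle factors now yield distinguishable products, and one leverages this to build independent coding gadgets --- complexes whose iterated products remain separated --- and thereby encode arbitrarily long, non--collapsing computations. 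Showing that such a coding exists, is first--order definable, and can be carried out uniformly across $\set{\mathcal{P}(S): S \in \V}$ would establish undecidability.

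The hard part will be this converse direction. Turning a single inequality $a \circ b \circ c \neq a \circ c$ into a genuine undecidable interpretation requires generating unboundedly many mutually independent, products--distinguishable complexes and defining the entire coding within the impoverished signature $\set{\circ}$, with no Boolean operations and no direct access to the elements of $S$ or to set membership. Controlling this combinatorial encoding, and in particular making it uniform over all $S \in \V$ so that one undecidable theory is interpreted in the class rather than merely in one structure, is where the real difficulty lies. The collapse identity $x \circ y \circ z = x \circ z$ is precisely the dividing line beneath which the room needed for such an encoding disappears, which is what makes it the exact boundary in the statement.
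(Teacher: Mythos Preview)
The paper does not prove this theorem: it is quoted from \cite{bay88} (via \cite{PV05}) without argument, so there is no proof in the paper to compare your proposal against.

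As to the proposal itself, it is a plausible high--level plan but not yet a proof. In the direction $\V \subseteq \W \Rightarrow$ decidable, you correctly observe the collapse $A \circ B \circ C = A \circ C$ for nonempty complexes, but the subsequent appeal to ``an Ehrenfeucht--Fra\"{\i}ss\'e / Feferman--Vaught analysis'' and the claim that the $n$--quantifier type is determined by ``a finite, computable collection of invariants'' are programmatic rather than established: you have not said what those invariants are, nor shown how quantifiers over complexes are eliminated or reduced to a decidable theory. The collapse of long products is necessary for such a reduction, but it is not by itself sufficient; one still has to control what first--order formulas can say about the binary relation $A \circ B$ on nonempty complexes, and this step carries all the content.

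For the converse you are candid that it is the hard part, and your sketch does not get beyond that admission. A single witness $a \circ b \circ c \neq a \circ c$ in some $S \in \V$ does not by itself yield an infinite supply of independent, first--order--definable gadgets in $\mathcal{P}(S)$; turning that one inequality into an interpretation of an undecidable problem in the pure signature $\{\circ\}$ is exactly the substance of Baymuratov's argument, and nothing in your outline indicates how it would be carried out. In short, the strategic decomposition is reasonable, but both halves remain to be proved.
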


Neither $\klam{\omega, +, 0}$ nor $\klam{\omega, \cdot, 1}$ satisfy $x\circ y \circ z = x \circ z$. Since the power structure of $\klam{\omega, +, 0}$ is a reduct of $\CmN^{\plus}$, this is another way to show that $\FO\CmN^{\plus}$ is undecidable. It also applies to $\klam{\omega, \cdot, 1}$:
\begin{cor}
$\FO\CmN^{\mal}$ is undecidable.
\end{cor}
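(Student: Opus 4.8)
The plan is to reuse, for the multiplicative structure, exactly the reduction already invoked for $\klam{\omega,+,0}$. First I would identify the relevant power structure: the power structure of the semigroup $\klam{\omega,\cdot}$ is $\klam{2^\omega,\mal}$, with $a\mal b=\set{n\cdot m:n\in a,\ m\in b}$, and this is precisely the reduct of $\CmN^{\mal}$ to the single operation $\mal$.

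Next I would invoke Theorem~\ref{thm:etsg}. As already noted, $\klam{\omega,\cdot}$ fails the identity $x\circ y\circ z=x\circ z$ (e.g.\ $2\cdot2\cdot2=8\neq4=2\cdot2$), so the variety $\V$ of semigroups generated by $\klam{\omega,\cdot}$ is not contained in $\Var(\set{x\circ y\circ z=x\circ z})$, since the generator itself already violates the law. Theorem~\ref{thm:etsg} then tells us that the elementary theory of the power structure $\klam{2^\omega,\mal}$ is undecidable.

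Finally I would transfer this undecidability to the full algebra through the reduct relationship, exactly as in the additive case. Since $\klam{2^\omega,\mal}$ is a reduct of $\CmN^{\mal}$ on the same domain, a sentence $\phi$ in the signature $\set{\mal}$ has the same truth value in the reduct as in $\CmN^{\mal}$; hence $\FO\klam{2^\omega,\mal}=\FO\CmN^{\mal}\cap\mathrm{Sent}(\set{\mal})$. A decision procedure for $\FO\CmN^{\mal}$ would, after the decidable test of whether a given sentence lies in the smaller signature, decide $\FO\klam{2^\omega,\mal}$ as well, contradicting the previous step.

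I expect the one delicate point to be the passage from Theorem~\ref{thm:etsg}, which speaks of the class of power structures of a variety, to the single power structure $\klam{2^\omega,\mal}$. One must confirm that the theorem licenses pinning the undecidability on this particular member---justified here because $\klam{\omega,\cdot}$ itself lies in $\V$ and already refutes $x\circ y\circ z=x\circ z$---just as was tacitly done for $\CmN^{\plus}$ above.
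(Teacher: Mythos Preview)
Your proposal is correct and follows essentially the same approach as the paper: invoke Theorem~\ref{thm:etsg} for $\klam{\omega,\cdot}$, which fails $x\circ y\circ z = x\circ z$, and transfer undecidability from the power-structure reduct $\klam{2^\omega,\mal}$ to $\CmN^{\mal}$. Your explicit remark on the passage from the class of power structures to the single structure is more careful than the paper itself, which makes the same tacit move for both $\plus$ and $\mal$ without comment.
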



\bibliographystyle{apalike}
\bibliography{acalg}

\begin{thebibliography}{}

\bibitem[Baker, 1977]{baker77}
Baker, K.~A. (1977).
\newblock Finite equational bases for finite algebras in a
  congruence--distributive equational class.
\newblock {\em Advances in Mathematics}, 24(3):207--243.

\bibitem[Bayasgalan, 1988]{bay88}
Bayasgalan, B. (1988).
\newblock Decidability of theories of derived structures of semigroups.
\newblock {\em Algebraic Systems and their Varieties, Ural. Gos. Univ.
  Sverdlovsk}, pages 4--13.
\newblock In Russian.

\bibitem[Birkhoff, 1935]{bir35}
Birkhoff, G. (1935).
\newblock On the structure of abstract algebras.
\newblock {\em Proceedings of the Cambridge Philosophical Society},
  31:433--454.

\bibitem[Burris and Sankappanavar, 1981]{bs_ua}
Burris, S. and Sankappanavar, H.~P. (1981).
\newblock {\em A Course in Universal Algebra}.
\newblock Springer-Ver\-lag, New York.

\bibitem[Gla{\ss}er et~al., 2007]{GRTW07}
Gla{\ss}er, C., Reitwie{\ss}ner, C., Travers, S.~D., and Waldherr, M. (2007).
\newblock Satisfiability of algebraic circuits over sets of natural numbers.
\newblock In Arvind, V. and Prasad, S., editors, {\em Foundations of Software
  Technology and Theoretical Computer Science, 27th International Conference
  (FSTTCS 2007), New Delhi, India, December 12-14, Proceedings}, volume 4855 of
  {\em Lecture Notes in Computer Science}, pages 253--264. Springer.

\bibitem[Goldblatt, 1989]{gol89}
Goldblatt, R. (1989).
\newblock Varieties of complex algebras.
\newblock {\em Annals of Pure and Applied Logic}, 44:173--242.

\bibitem[Je{\.{z}} and Okhotin, 2008]{cf:j+o08a}
Je{\.{z}}, A. and Okhotin, A. (2008).
\newblock On the computational completeness of equations over sets of natural
  numbers.
\newblock In Aceto, L., Damg{\aa}rd, I., Halld{\'{o}}rsson, L. G.~M.,
  Ing{\'{o}}lfsd{\'{o}}ttir, A., and Walukiewicz, I., editors, {\em Proceedings
  of {ICALP} 2008, Part II}, volume 5126 of {\em LNCS}, pages 63--74. Springer.

\bibitem[Jipsen, 1992]{jip92}
Jipsen, P. (1992).
\newblock {\em Computer aided investigations of relation algebras}.
\newblock PhD thesis, Vanderbilt University.

\bibitem[J{\'o}nsson and Tarski, 1951]{jt51}
J{\'o}nsson, B. and Tarski, A. (1951).
\newblock Boolean algebras with operators {I}.
\newblock {\em American Journal of Mathematics}, 73:891--939.

\bibitem[Koppelberg, 1989]{kop89}
Koppelberg, S. (1989).
\newblock {\em General Theory of {B}oolean Algebras}, volume~1 of {\em Handbook
  on {B}oolean Algebras}.
\newblock North--Holland.

\bibitem[McKenzie and Wagner, 2003]{McKenzieW03}
McKenzie, P. and Wagner, K.~W. (2003).
\newblock The complexity of membership problems for circuits over sets of
  natural numbers.
\newblock In Alt, H. and Habib, M., editors, {\em {STACS} 2003, 20th Annual
  Symposium on Theoretical Aspects of Computer Science, Berlin, Germany,
  February 27 - March 1, 2003, Proceedings}, volume 2607 of {\em Lecture Notes
  in Computer Science}, pages 571--582. Springer--Verlag.

\bibitem[McKenzie and Wagner, 2007]{McKenzieW07}
McKenzie, P. and Wagner, K.~W. (2007).
\newblock The complexity of membership problems for circuits over sets of
  natural numbers.
\newblock {\em Computational Complexity}, 16(3):211--244.

\bibitem[Okhotin, 2001]{okh2001}
Okhotin, A. (2001).
\newblock Conjunctive grammars.
\newblock {\em Journal of Automata, Languages and Combinatorics}, 4:519--535.

\bibitem[Okhotin, 2003]{Okhotin03}
Okhotin, A. (2003).
\newblock Decision problems for language equations with boolean operations.
\newblock In Baeten, J. C.~M., Lenstra, J.~K., Parrow, J., and Woeginger,
  G.~J., editors, {\em ICALP}, volume 2719 of {\em Lecture Notes in Computer
  Science}, pages 239--251. Springer.

\bibitem[Pinus and Vazhenin, 2005]{PV05}
Pinus, A.~G. and Vazhenin, Y.~M. (2005).
\newblock Elementary classification and decidability of theories of derived
  structures.
\newblock {\em Russian Math. Surveys}, 60:395--432.

\bibitem[Pratt-Hartmann and D{\"u}ntsch, 2009]{pd_acfunc}
Pratt-Hartmann, I. and D{\"u}ntsch, I. (2009).
\newblock Functions definable by arithmetic circuits.
\newblock In Ambos-Spies, K., L{\"o}we, B., and Merkle, W., editors, {\em
  Mathematical Theory and Computational Practice: 5th Conference on
  Computability in Europe, CiE 2009}, volume 5635 of {\em Lecture Notes in
  Computer Science}, pages 409--418, Heidelberg. Springer Verlag.

\bibitem[Reich, 1996]{reich_diss}
Reich, P. (1996).
\newblock {\em Complex algebras of semigroups}.
\newblock PhD thesis, Iowa State University, Ames.

\end{thebibliography}
\end{document}